\documentclass{amsart}
\usepackage{graphicx}
\usepackage{amscd}
\usepackage{amsmath}
\usepackage{amsfonts}
\usepackage{amssymb}
\usepackage{bbm}
\usepackage{setspace}
\usepackage{enumerate}         
\usepackage{fixme}
\usepackage{color}
\usepackage{url}
\usepackage{amsthm}
\usepackage{bm}
\usepackage{xy}
\usepackage{enumitem}
\usepackage{mathrsfs}
\usepackage[gen]{eurosym}
\usepackage{todonotes}

\theoremstyle{plain}
\newtheorem{theorem}{Theorem}[section]

\newtheorem{lemma}[theorem]{Lemma}

\newtheorem{proposition}[theorem]{Proposition}

\newtheorem{definition}[theorem]{Definition}

\theoremstyle{remark}

\numberwithin{equation}{section}

\newcommand{\ind}{1\!\kern-1pt \mathrm{I}}
\newcommand{\rsto}{]\!\kern-1.8pt ]}
\newcommand{\lsto}{[\!\kern-1.7pt [}

\vfuzz2pt 
\hfuzz2pt 

\numberwithin{equation}{section}




\newcommand{\Hess}{\operatorname{Hess}}

\renewcommand{\rho}{\varrho}

\begin{document}
\title[On portfolios generated by optimal transport]{On portfolios generated by optimal transport}

\begin{abstract}
First introduced by Fernholz in stochastic portfolio theory, functionally generated portfolio allows its investment performance to be attributed to directly observable and easily interpretable market quantities. In previous works \cite{PW15, PW16} we showed that Fernholz's multiplicatively generated portfolio has deep connections with optimal transport and the information geometry of exponentially concave functions. Recently, Karatzas and Ruf \cite{KR17} introduced a new additive portfolio generation whose relation with optimal transport was studied by Vervuurt \cite{V16}. We show that additively generated portfolio can be interpreted in terms of the well-known dually flat information geometry of Bregman divergence. Moreover, we characterize, in a sense to be made precise, all possible forms of functional portfolio constructions that contain additive and multiplicative generations as special cases. Each construction involves a divergence functional on the unit simplex measuring the market volatility captured, and admits a pathwise decomposition for the portfolio value. We illustrate with an empirical example.%
\end{abstract}

\keywords{Stochastic portfolio theory, functionally generated portfolio, optimal transport, information geometry, divergence}
\subjclass[2000]{}

\author{Ting-Kam Leonard Wong}
\address{University of Southern California, Los Angeles, CA 90089, United States}
\date{\today}
\maketitle %

\section{Introduction}\label{sec:intro}
An important problem in portfolio management is performance attribution. For a general investment algorithm, it is often difficult to explain the PnL (profit and loss) of the portfolio transparently in terms of market movements. Mathematically, we may think of the portfolio value as an integral of the trading strategy with respect to the underlying price process (see for example the first chapter of \cite{F02} and \cite{KS98}). This is generally a complicated function of the path taken by the market and does not admit convenient simplifications. %

In \cite{F99} Fernholz introduced a portfolio construction and proved a pathwise decomposition formula for the value process, when the market portfolio is taken as the numeraire. Called functionally generated portfolios, they are now key tools of stochastic portfolio theory as explained in \cite{F02, FK09}. A particularly important consequence is that they allow us to formulate simple structural conditions on large equity markets under which it is possible to outperform the market portfolio. For examples of such relative arbitrages with respect to the market portfolio as well as the structural conditions required, we refer the reader to the papers \cite{F99, FK05, FKK05, BF08, VK15, P16, FKR16, KR17} and their references. Following \cite{KR17}, we say that Fernholz's portfolios are {\it multiplicatively} generated -- this terminology will become clear in Section \ref{sec:fgp}. %

While the aforementioned papers worked in continuous time where the stock prices are It\^{o} processes, in a series of papers \cite{PW13, PW15, PW16} we worked in a discrete time and model-independent framework, and discovered an elegant connection between multiplicatively generated portfolios and optimal transport with a logarithmic cost function. For general introductions to optimal transport see \cite{V03, V08}. Furthermore, in \cite{PW16} we showed that optimal rebalancing of these portfolios can be interpreted in terms of a new information geometry of exponentially concave function and $L$-divergence (see Definition \ref{def:L.divergence}). Mathematically, the work \cite{PW16} reveals a previously unexplored link between optimal transport and information geometry and identified important examples beyond the classical Wasserstein costs.  See \cite{P17} for an extension of this connection to cost functions given by the cumulant generating functions of exponential families. Information geometry originated in the study of statistical inference using ideas of differential geometry. After several decades of intense development, it has now numerous applications in statistics, information theory and machine learning. See \cite{A16} for an introduction to this beautiful area. In our context, the main idea is that the trading strategy, given by an exponentially concave function, induces a geometric structure on the unit simplex which represents the states of the stock market. The financial gains and losses of the portfolio can then be quantified using geometric concepts such as divergences (in the sense of Definition \ref{def:divergence} below) and angles. %

On the other hand, a new class of {\it additively} generated portfolios was introduced by Karatzas and Ruf in \cite{KR17}. For these portfolios it is also possible to derive a pathwise decomposition which is analogous to Fernholz's formula. Recently, it was observed in the thesis \cite{V16} that additively generated portfolios also correspond to an optimal transport problem. Here the cost function is the Euclidean inner product which is equivalent to the classical quadratic cost $c(x, y) = |x - y|^2$. Analogous to \cite{PW16}, we will see in this paper that these portfolios are connected to the well known dually flat geometry of Bregman divergence introduced by Amari and Nagaoka in \cite{NA82}. We emphasize that these geometric ideas becomes apparent only in discrete time. This is because in continuous time all higher order effects except the quadratic variation (which is essentially the Riemannian metric) vanish. %

Some natural questions are the following: What is the relationship between multiplicatively and additively generated portfolios? Are there other forms of functional portfolio generation that are connected to optimal transport and information geometry? In this paper we answer these questions. We consider a general framework of functional portfolio construction (see Definition \ref{def:generation}) and characterize all its possible forms. As will become clear, functional portfolio generation, even in our extended form, imposes strong conditions on the feasible trading strategies. We find that apart from multiplicative and additive generation there exists a new family of portfolio constructions that can be parameterized by two parameters $\alpha > 0$ and $C \geq 0$. We show that each trading strategy $\eta$ in this class is a long-short portfolio of a multiplicatively generated portfolio and the market portfolio. Moreover, we prove in Theorem \ref{thm:general.decomp} a discrete time pathwise decomposition of the value process $V_{\eta}(t)$ when the market portfolio is taken as the numeraire and $V_{\eta}(\cdot) > -C$:
\begin{equation*}
\frac{1}{\alpha} \log \frac{C + V_{\eta}(t)}{C + V_{\eta}(0)} = \varphi(\mu(t)) - \varphi(\mu(0)) + \sum_{s = 0}^{t - 1} D\left[\mu(s + 1) \mid \mu(s) \right].
\end{equation*}
Here $\varphi$ is the generating function which is $\alpha$-exponentially concave (i.e., $e^{\alpha \varphi}$ is concave), $D\left[ \cdot \mid \cdot\right] \geq 0$ is a divergence (in the sense of information geometry) on the unit simplex $\Delta_n$  (here $n$ is the number of stocks)  that we call the $L^{(\alpha)}$-divergence, and $\mu(t) \in \Delta_n$ is the market weight at time $t$. In this framework, multiplicative generation corresponds to the case $(\alpha, C) = (1, 0)$, and additive generation is the limit when $\alpha = \frac{1}{C} \downarrow 0$. An advantage of our parameterization is that we can generate different portfolios, with the same generating function $\varphi$, by varying $\alpha$ and $C$. Moreover, the $L^{(\alpha)}$-divergence is the usual $L$-divergence when $\alpha = 1$, and tends to the Bregman divergence when $\alpha \downarrow 0$. In our context, it is the canonical interpolation between the $L$-divergence and the Bregman divergence. By varying the parameters, we may interpolate between the two known portfolio constructions. We hope this work clarifies further the role of optimal transport and information geometry in functional portfolio generation. %

\subsection{Outline of the paper}
The remainder of the paper is organized as follows. In Section \ref{sec:market} we present the discrete time market model and introduce various ways of representing a trading strategy and the associated value process. In Section \ref{sec:fgp} we recall the definitions of multiplicatively and additively generated portfolios, and compare the corresponding decomposition formulas for the portfolio value process. These results motivate our general framework of functional portfolio generation. Section \ref{sec:geometry} explains the connections with optimal transport and information geometry. Our main results about functional portfolio generation and pathwise decomposition are proved in Section \ref{sec:main}. Finally, we illustrate the new portfolio construction with an empirical example. %

\section{The market model} \label{sec:market}
We work in a discrete time, model-independent framework that is used in our previous papers \cite{PW13, PW15}. The reader may refer to these papers for further details. Let $n \geq 2$, the number of stocks in the market, be fixed. The data of our model is a sequence $\{X(t) = (X_1(t), \ldots, X_n(t))\}_{t = 0}^{\infty}$ with values in $(0, \infty)^n$. We regard $X_i(t)$ as the market capitalization of stock $i$ at time $t$. The vector of market weights at time $t$ is defined by
\begin{equation} \label{eqn:market.weight}
\mu(t) = (\mu_1(t), \ldots, \mu_n(t)) = \left( \frac{X_1(t)}{X_1(t) + \cdots + X_n(t)}, \ldots, \frac{X_n(t)}{X_1(t) + \cdots + X_n(t)}\right).
\end{equation}
The vector $\mu(t)$ takes values in the open unit simplex
\begin{equation} \label{eqn:simplex}
\Delta_n = \{p = (p_1, \ldots, p_n) \in (0, 1)^n: p_1 + \cdots + p_n = 1\}.
\end{equation}
We denote by $\overline{\Delta}_n$ the closure of $\Delta_n$ in $\mathbb{R}^n$.

In this market we consider various self-financing trading strategies. Let us express a strategy in terms of the number of shares held at each point in time. Furthermore, we use the market portfolio as the numeraire. This means that the (relative) value of stock $i$ is simply the market weight $\mu_i(t)$. We assume that trading is frictionless (see \cite{PW13, PW15} for more details).

\begin{definition}[Trading strategy]
A self-financing trading strategy is a sequence $\eta = \{\eta(t)\}_{t = 0}^{\infty}$, with values in $\mathbb{R}^n$, such that the self-financing identity
\begin{equation} \label{eqn:self.financed}
\sum_{i = 1}^n \eta_i(t) \mu_i(t + 1) \equiv \sum_{i = 1}^n \eta_i(t + 1) \mu_i(t + 1)
\end{equation}
holds for all time $t$. The (relative) value process of $\eta$ is defined by
\begin{equation} \label{eqn:value}
V_{\eta}(t) = V_{\eta}(0) + \sum_{s = 0}^{t - 1} \left( \eta(t) \cdot (\mu(t + 1) - \mu(t)) \right),
\end{equation}
where $V_{\eta}(0) = \eta(0) \cdot \mu(0)$ and $a \cdot b$ is the Euclidean inner product.
\end{definition}

In the above definition, note that the portfolio's initial value is determined implicitly by $V_{\eta}(0) = \eta(0) \cdot \mu(0)$.

The portfolio $\eta(t)$ at time $t$ is chosen as a function of the previous prices $\{\mu(s)\}_{s = 0}^t$ up to time $t$ as well as other currently available information that can be modeled by a filtration $\{\mathcal{F}(t)\}_{t = 0}^{\infty}$. Nevertheless, we do not think of $\{\mu(t)\}_{t = 0}^{\infty}$ as a stochastic process. It is just some fixed path, chosen by nature, whose components are revealed one after another to the investor. In this paper we only study the value of a trading strategy relative to the market portfolio, so for simplicity we may omit the word `relative'. Note that because we allow both long and short positions in the portfolio, the value $V_{\eta}(t)$ may take negative values.

If the portfolio value $V_{\eta}(t)$ is strictly positive for all $t$, we may define the corresponding portfolio process by
\begin{equation} \label{eqn:portfolio.process}
\pi(t) = (\pi_1(t), \ldots, \pi_n(t)) = \left( \frac{\eta_1(t) \mu_1(t)}{V_{\eta}(t)}, \ldots, \frac{\eta_n(t) \mu_n(t)}{V_{\eta}(t)}\right).
\end{equation}
The components of $\pi(t)$ represent the percentages of current capital invested in each of the stocks; clearly $\sum_{i = 1}^n \pi_i(t) \equiv 1$. We call $\pi(t)$ the portfolio weight vector at time $t$. In this case, the value $V_{\eta}(t)$ can be expressed multiplicatively in the form
\begin{equation} \label{eqn:V.multiplicative}
V_{\eta}(t) = V_{\eta}(0) \prod_{s = 0}^{t - 1} \left(\pi(t) \cdot \frac{\mu(t + 1)}{\mu(t)}\right),
\end{equation}
where $\frac{\mu(t + 1)}{\mu(t)}$ is the vector of componentwise ratios. Compare this with the additive representation \eqref{eqn:value}. Intuitively, the definition of the value process given by \eqref{eqn:portfolio.process} and \eqref{eqn:V.multiplicative} limits the forms of functional portfolio generation that are compatible with this structure.

Before ending this section we recall an observation from \cite[Proposition 2.3]{KR17}. Given a sequence $\{\tilde{\eta}(t)\}_{t = 0}^{\infty}$ in $\mathbb{R}^n$ that may not be self-financing in the sense of \eqref{eqn:self.financed}, we can make a self-financing trading strategy out of it by setting
\begin{equation} \label{eqn:corrected.strategy}
\eta_i(t) = \tilde{\eta}_i(t) - Q^{\eta}(t) - C,
\end{equation}
where
\begin{equation} \label{eqn:defect}
Q^{\tilde{\eta}}(t) := \tilde{\eta}(t) \cdot \mu(t) - \tilde{\eta}(0) \cdot \mu(0) - \sum_{s = 0}^{t - 1} \tilde{\eta}(s) \cdot (\mu(s + 1) - \mu(s))
\end{equation}
is the `defect of self-financibility' and $C \in \mathbb{R}$ is a constant that controls the initial value $V_{\eta}(0) = \eta(0) \cdot \mu(0)$ of the portfolio. The `corrected' strategy $\eta$ satisfies
\begin{equation} \label{eqn:eta.identity}
\eta(t) \cdot (\mu(t + 1) - \mu(t)) = \tilde{\eta}(t) \cdot (\mu(t + 1) - \mu(t))
\end{equation}
for all $t$, so, remarkably, the returns are not affected. In \cite{KR17} the authors formulated these quantities in continuous time. The above definitions are straightforward adaptations to the present context.%

\section{Multiplicatively and additively generated portfolios} \label{sec:fgp}
In this section we review the definitions and main results of the two known forms of functional generation. For our purposes and for simplicity of exposition, we will assume that the generating functions are smooth and concave in a suitable sense.

\subsection{Multiplicatively generated portfolio}
We follow the treatment of \cite{PW15} which extended Fernholz's original construction.

\begin{definition}[Multiplicatively generated portfolio] \label{def:multiplicative.fgp}
Let $\varphi: \Delta \rightarrow \mathbb{R}$ be a smooth function such that $\Phi := e^{\varphi}$ is concave. (We call $\varphi$ an exponentially concave function.). Given the generating function $\varphi$, we define a mapping $\boldsymbol{\pi} : \Delta_n \rightarrow \overline{\Delta}_n$, called the portfolio map, by
\begin{equation} \label{eqn:multiplicative.map}
\boldsymbol{\pi}_i(p) = p_i \left(1 + D_{e_i - p} \varphi(p) \right), \quad i = 1, \ldots, n,
\end{equation}
where $(e_1, \ldots, e_n)$ is the standard Euclidean basis and $D_{e_i - p}$ is the directional derivative along the vector $e_i - p$. The portfolio process generated by $\varphi$ is given by
\begin{equation} \label{eqn:multiplicative.portfolio}
\pi(t) = \left(\frac{\eta_1(t) \mu_1(t)}{V_{\eta}(t)}, \ldots, \frac{\eta_n(t) \mu_n(t)}{V_{\eta}(t)}\right) = \boldsymbol{\pi}(\mu(t)).
\end{equation}
\end{definition}

Here is a geometric way (which is new but follows immediately from known results) to interpret the formula \eqref{eqn:multiplicative.map} which probably looks a bit strange at first sight. Consider the graph of the concave function $\Phi = e^{\varphi}$. Given $p \in \Delta_n$, let the tangent hyperplane to $\Phi$ at $p$ be given by $q \mapsto \sum_{i = 1}^n c_i q_i$ (see Figure \ref{fig:fgp1}). Then, it can be verified that the portfolio vector $\boldsymbol{\pi}(p)$ is given by
\begin{equation} \label{eqn:multiplicative.portfolio.2}
\boldsymbol{\pi}_i(p) = \frac{c_i p_i}{c_1p_1 + \cdots + c_np_n}, \quad i = 1, \ldots, n.
\end{equation}
In particular, the weight ratio $\frac{\boldsymbol{\pi}_i(p)}{p_i}$ is proportional to $c_i$. This construction is said to be multiplicative because we are specifying the weight ratios in terms of the derivatives of the generating function $\varphi$. The weight ratio can be regarded as an unnormalized Radon-Nikodym derivative of the portfolio weight with respect to the market weight.

\usetikzlibrary{positioning}
\tikzset{state/.style={circle,draw=black, very thick,inner sep=3pt,fill=white}}

\begin{figure}[t!]
\begin{tikzpicture}[scale = 0.55]
\draw (0, 0) to (5, 1);
\draw (5, 1) to (10, -1);
\draw (10, -1) to (0, 0);

\draw[dashed] (0, 0) to (0, 7);
\draw[dashed] (5, 1) to (5, 8);
\draw[dashed] (10, -1) to (10, 6);

\draw [draw=black, fill=gray, opacity=0.25] (0,5) -- (5,5) -- (10,2) -- cycle;

\path[fill=blue!50,opacity=.5] (1, 3) to [bend left=30] (4.2, 4.5) to [bend left=20] (8.2, 2) to [bend left=20] (1, 3);
\draw[blue] (1, 3) to [bend left=30] (4.2, 4.5);
\draw[blue] (4.2, 4.5) to [bend left=20] (8.2, 2);
\draw[blue] (8.2, 2) to [bend left = 20] (1, 3);
\draw[blue, dashed] (1, 3) to [bend left=10] (8.2, 2);

\node[circle, draw = black, fill = black,  inner sep=0pt, minimum size=3pt] at (0, 5) {};
\node[circle, draw = black, fill = black,  inner sep=0pt, minimum size=3pt] at (10, 2) {};
\node[circle, draw = black, fill = black,  inner sep=0pt, minimum size=3pt] at (5, 5) {};
\node [black, left] at (0, 5) {{\footnotesize $c_1$}};
\node [black, right] at (10, 2) {{\footnotesize $c_2$}};
\node [black, right] at (5, 5.1) {{\footnotesize $c_n$}};
\node [black, right] at (4.7, 0.06) {{\footnotesize $\Delta_n$}};

\node[circle, draw=blue, fill = black, inner sep=0pt, minimum size=3pt, label = below: {\footnotesize $\Phi(p)$}] at (4.2, 4.5) {};
\end{tikzpicture}
\caption{Geometric interpretation of multiplicatively generated portfolio.} \label{fig:fgp1}
\end{figure}
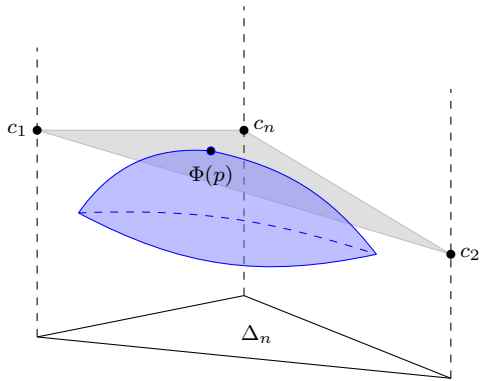

The reason why we need concave functions is that these are the only portfolio maps that capture volatility in a model-independent way. For precise statements of this assertion see \cite[Theorem 1]{PW15} and Section \ref{sec:geometry} below. To state the decomposition formula we also need the concept of $L$-divergence introduced in \cite{PW15}.

\begin{definition}[$L$-divergence] \label{def:L.divergence}
Let $\varphi$ be as in Definition \ref{def:multiplicative.fgp}. The $L$-divergence of $\varphi$ is the functional $D_{L, \varphi} \left[ \cdot \mid \cdot \right] : \Delta_n \times \Delta_n \rightarrow [0, \infty)$ defined by
\begin{equation} \label{eqn:L.divergence}
D_{L, \varphi}\left[q \mid p\right] = \log \left(1 + \nabla \varphi(p) \cdot (q - p) \right) - \left( \varphi(q) - \varphi(p) \right), \quad p, q \in \Delta_n,
\end{equation}
where $\nabla \varphi$ is the usual Euclidean gradient.
\end{definition}

Here $L$ stands for `logarithmic'. In \cite{PW15} we proved that the $L$-divergence is non-negative, and is strictly positive when $e^{\varphi}$ is strictly concave. Note that because the function $\varphi$ is the logarithm of a concave function, in \eqref{eqn:L.divergence} we can include a logarithmic correction which improves the standard linear approximation. This contrasts with the classical Bregman divergence reviewed in Definition \ref{def:Bregman} below.

Each and every multiplicatively generated portfolio admits a pathwise decomposition given as follows.

\begin{theorem} [Multiplicative decomposition] \cite{F99, PW15} \label{thm:fernholz}
Let $\eta$ be the trading strategy generated multiplicatively by the exponentially concave function $\varphi$ as in Definition \ref{def:multiplicative.fgp}. Then the value process of $\eta$ is given by
\begin{equation} \label{eqn:multiplicative.decomp}
\log V_{\eta}(t) - \log V_{\eta}(0) = \varphi(\mu(t)) - \varphi(\mu(0)) + \sum_{s = 0}^{t - 1} D_{L, \varphi} \left[ \mu(t + 1) \mid \mu(t)\right].
\end{equation}
\end{theorem}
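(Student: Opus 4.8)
The plan is to work directly from the multiplicative representation \eqref{eqn:V.multiplicative} of the value process and reduce the whole statement to a single pointwise algebraic identity. First I would record that the representation is legitimate here: since the portfolio map takes values in $\overline{\Delta}_n$, the weights $\pi_i(s) = \boldsymbol{\pi}_i(\mu(s))$ are nonnegative and sum to one, and the ratios $\mu_i(s+1)/\mu_i(s)$ are strictly positive, so each factor $\pi(s) \cdot \frac{\mu(s+1)}{\mu(s)}$ is a convex combination of positive numbers, hence strictly positive. Thus $V_{\eta}(\cdot)$ stays positive and \eqref{eqn:V.multiplicative} applies. Taking logarithms turns the product into the sum $\sum_{s=0}^{t-1} \log\left(\pi(s) \cdot \frac{\mu(s+1)}{\mu(s)}\right)$, so it suffices to identify each summand.

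The key step is to evaluate $\pi(s) \cdot \frac{\mu(s+1)}{\mu(s)}$ using the explicit portfolio map \eqref{eqn:multiplicative.map}. Writing $p = \mu(s)$ and $q = \mu(s+1)$, the factor $p_i$ in $\boldsymbol{\pi}_i(p) = p_i\left(1 + D_{e_i - p}\varphi(p)\right)$ cancels the denominator $\mu_i(s) = p_i$ in the ratio, leaving $\sum_{i=1}^n q_i\left(1 + D_{e_i - p}\varphi(p)\right)$. I would split this as $\sum_i q_i = 1$ (because $q \in \Delta_n$) plus $\sum_i q_i\, \nabla\varphi(p)\cdot(e_i - p)$; since the directions $e_i - p$ are tangent to the simplex, the directional derivative equals $\nabla\varphi(p)\cdot(e_i - p)$ unambiguously. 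Pulling out the gradient and using $\sum_i q_i(e_i - p) = q - p$ (again because $\sum_i q_i = 1$) collapses the second term to $\nabla\varphi(p)\cdot(q - p)$, yielding
\[
\pi(s) \cdot \frac{\mu(s+1)}{\mu(s)} = 1 + \nabla\varphi(\mu(s))\cdot\left(\mu(s+1) - \mu(s)\right).
\]
This identity is the heart of the argument and the only place where both the special form of the map and the simplex constraint are used.

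Finally, I would read off the definition of the $L$-divergence \eqref{eqn:L.divergence}: rearranging it gives $\log\left(1 + \nabla\varphi(p)\cdot(q-p)\right) = D_{L, \varphi}\left[q \mid p\right] + \left(\varphi(q) - \varphi(p)\right)$. Substituting the identity above into each logarithmic summand and summing over $s$, the increments $\varphi(\mu(s+1)) - \varphi(\mu(s))$ telescope to $\varphi(\mu(t)) - \varphi(\mu(0))$, producing exactly \eqref{eqn:multiplicative.decomp}. I do not expect a serious obstacle here; the entire content is the one line collapsing the $q$-weighted sum of directional derivatives to $\nabla\varphi(p)\cdot(q-p)$, and the only matters requiring a word of care are the positivity of the value process noted above and the tangency remark justifying the directional-derivative manipulation.
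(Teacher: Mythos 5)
Your proof is correct and takes essentially the same route as the paper: the one-step identity $\pi(s)\cdot\frac{\mu(s+1)}{\mu(s)} = 1 + \nabla\varphi(\mu(s))\cdot\left(\mu(s+1)-\mu(s)\right)$, combined with the definition \eqref{eqn:L.divergence} and telescoping, is precisely the computation the paper uses for the analogous decompositions (Theorem~\ref{thm:additive.decomp} and Theorem~\ref{thm:general.decomp}, of which this theorem is the special case $(\alpha, C)=(1,0)$) and the argument of the cited references. Your positivity observation correctly justifies the use of the multiplicative representation \eqref{eqn:V.multiplicative}, so no gap remains.
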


The financial meaning of \eqref{eqn:multiplicative.decomp} will be discussed, in Section \ref{sec:financial.meaning}, after we introduce additively generated portfolios. Here let us give an example (for more examples see \cite[Chapter 3]{F02}). For a fixed element $\pi \in \overline{\Delta}_n$, let
\[
\varphi(p) = \sum_{i = 1}^n \pi_i \log p_i
\]
be the cross entropy in information theory (see \cite{CT06}), or equivalently the logarithm of the geometric mean $\Phi(p) = p_1^{\pi_1} \cdots p_n^{\pi_n}$ with weights $\pi$. Then the portfolio map $\boldsymbol{\pi}(\cdot)$ generated by $\varphi$ is simply the constant map $\boldsymbol{\pi}(p) \equiv \pi$, $p \in \Delta_n$. We call this the constant-weighted portfolio. The corresponding $L$-divergence is given by
\begin{equation} \label{eqn:free.energy}
D_{L, \varphi}\left[q \mid p\right] = \log \left( \pi \cdot \frac{q}{p} \right) - \pi \cdot \log \frac{q}{p}, \quad p, q \in \Delta_n.
\end{equation}
Here $\log \frac{q}{p}$ has components $\log \frac{q_i}{p_i}$. In finance this quantity is known as the excess growth rate \cite{F02, PW13} and the diversification return \cite{BF92, W11}.

 General comparison properties of $L$-divergences are studied in \cite{W15}. For example, it is proved there that if $\varphi(p) = \sum_{i = 1}^n \frac{1}{n} \log p_i$ which corresponds to the equal-weighted portfolio and $D_{\tilde{\varphi}, L}\left[ \cdot \mid \cdot \right] \geq D_{\varphi, L}\left[ \cdot \mid \cdot \right]$ for some $\tilde{\varphi}$, then $\varphi \equiv \tilde{\varphi}$ up to an additive constant. In other words, the $L$-divergence of the equal-weighted portfolio  is maximal and cannot be improved globally.

\subsection{Additively generated portfolio}
Now we adapt the work of \cite{KR17} and \cite{V16} to our discrete time setting, and derive a decomposition formula analogous to \eqref{eqn:multiplicative.decomp}.

The main idea of additively generated portfolio is the following. Take a (concave) generating function $\varphi: \Delta_n \rightarrow \mathbb{R}$ and consider its gradient
\begin{equation} \label{eqn:eta.gradient}
\tilde{\eta}(t) = \nabla \varphi(\mu(t)).
\end{equation}
The trading strategy generated additively by $\varphi$ is the self-financing strategy obtained from \eqref{eqn:eta.gradient} in the manner of \eqref{eqn:corrected.strategy} which corrects for the `defect of self-financibility'. The formula \eqref{eqn:additive.portfolio} we give below is taken from \cite[Section 3.3]{V16}.

\begin{definition} [Additively generated portfolio] \label{def:additive.portfolio}
Let $\varphi$ be a smooth concave function on $\Delta_n$. The trading strategy generated additively by $\varphi$ is defined by
\begin{equation} \label{eqn:additive.portfolio}
\eta_i (t) = D_{e_i - \mu(t)} \varphi(\mu(t)) + V_{\eta}(t), \quad i = 1, \ldots, n.
\end{equation}
This is well-defined for all $t$ once the initial value $V_{\eta}(0)$ is fixed.
\end{definition}

Note that the additively generated trading strategy depends on both the current market weight $\mu(t)$ and the current portfolio value $V_{\eta}(t)$, while the portfolio weights of the multiplicatively generated portfolio are deterministic functions of $\mu(t)$.

For completeness and to illustrate the notations, let us check the following

\begin{lemma} \label{lem:self.finance}
The trading strategy $\eta$ defined by \eqref{eqn:additive.portfolio} is self-financed, i.e., the identity \eqref{eqn:self.financed} holds.
\end{lemma}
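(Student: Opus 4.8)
The plan is to reduce the self-financing identity \eqref{eqn:self.financed} to the single fact that a uniform (i.e.\ coordinate-independent) shift of the share counts does not affect the value increment, which is precisely the content of \eqref{eqn:eta.identity}.

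First I would rewrite the strategy \eqref{eqn:additive.portfolio} in terms of the gradient $\tilde{\eta}(t) = \nabla \varphi(\mu(t))$ from \eqref{eqn:eta.gradient}. Expanding the directional derivative as $D_{e_i - \mu(t)} \varphi(\mu(t)) = \partial_i \varphi(\mu(t)) - \nabla \varphi(\mu(t)) \cdot \mu(t)$ gives
\[
\eta_i(t) = \tilde{\eta}_i(t) - \tilde{\eta}(t) \cdot \mu(t) + V_{\eta}(t), \qquad i = 1, \ldots, n,
\]
and the crucial observation is that the correction term $c(t) := - \tilde{\eta}(t) \cdot \mu(t) + V_{\eta}(t)$ is the same for every coordinate $i$. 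Thus $\eta(t)$ is simply $\tilde{\eta}(t)$ translated by the constant vector $c(t) (1, \ldots, 1)$, which matches the corrected-strategy recipe \eqref{eqn:corrected.strategy}.

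Next I would observe that \eqref{eqn:self.financed} is equivalent to $\sum_{i=1}^n [\eta_i(t+1) - \eta_i(t)] \, \mu_i(t+1) = 0$. Substituting the expression above and using $\sum_{i=1}^n \mu_i(t+1) = 1$ to collapse all $i$-independent terms, the cancellations reduce the sum to
\[
- \tilde{\eta}(t) \cdot (\mu(t+1) - \mu(t)) + \big( V_{\eta}(t+1) - V_{\eta}(t) \big).
\]
Finally I would close the argument using the increment form of \eqref{eqn:value}, namely $V_{\eta}(t+1) - V_{\eta}(t) = \eta(t) \cdot (\mu(t+1) - \mu(t))$, together with the key identity \eqref{eqn:eta.identity}: since $\sum_{i=1}^n (\mu_i(t+1) - \mu_i(t)) = 0$, the uniform shift $c(t)$ drops out of the inner product, so $\eta(t) \cdot (\mu(t+1) - \mu(t)) = \tilde{\eta}(t) \cdot (\mu(t+1) - \mu(t))$. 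The two surviving terms then cancel exactly, yielding the self-financing identity.

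I expect the only genuine bookkeeping obstacle to be keeping straight which quantities are constant in $i$ and correctly deploying the two simplex relations $\sum_{i=1}^n \mu_i(t+1) = 1$ and $\sum_{i=1}^n (\mu_i(t+1) - \mu_i(t)) = 0$; once these are handled, the verification is a direct computation.
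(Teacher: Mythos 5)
Your proof is correct, and it takes a genuinely different route from the paper's. The paper argues by induction on $t$: assuming the strategy is self-financing up to time $t$, it identifies $V_{\eta}(t+1)$ with the mark-to-market value $\sum_{i=1}^n \eta_i(t)\mu_i(t+1)$, and then shows that $\sum_{i=1}^n \eta_i(t+1)\mu_i(t+1)$ also equals $V_{\eta}(t+1)$, the key algebraic fact being $\sum_{i=1}^n \mu_i(t+1)\bigl(e_i - \mu(t+1)\bigr) = 0$, which kills the $\mu(t+1)$-weighted sum of the directional derivatives. You avoid induction altogether: taking the increment recursion \eqref{eqn:value} as the definition of $V_{\eta}$, you decompose $\eta(t) = \nabla\varphi(\mu(t)) + c(t)(1,\dots,1)$ into a gradient plus a coordinate-independent shift --- thereby making explicit the paper's remark before Definition \ref{def:additive.portfolio} that additive generation is the correction \eqref{eqn:corrected.strategy} applied to \eqref{eqn:eta.gradient} --- then reduce \eqref{eqn:self.financed} to the vanishing of $\sum_{i=1}^n [\eta_i(t+1)-\eta_i(t)]\,\mu_i(t+1)$ and verify this by direct cancellation, using only $\sum_{i=1}^n \mu_i(t+1)=1$ and $\sum_{i=1}^n(\mu_i(t+1)-\mu_i(t))=0$. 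The two arguments rest on the same elementary simplex identities (the cancellation of your $\tilde\eta(t+1)\cdot\mu(t+1)$ terms is the paper's key fact in disguise), but your route buys an induction-free verification and a cleaner conceptual link to \eqref{eqn:corrected.strategy} and \eqref{eqn:eta.identity}, whereas the paper's induction additionally makes explicit that the recursively defined value coincides with the mark-to-market value $\eta(t)\cdot\mu(t+1)$ --- a fact your argument never needs, but which is useful elsewhere in the theory.
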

\begin{proof}
Suppose the strategy is self-financing up to time $t$. Then, the portfolio value at time $t + 1$ is well defined and is equal to
\[
V_{\eta}(t + 1) = \sum_{i = 1}^n \eta_i(t) \mu_i(t + 1).
\]
On the other hand, we have
\begin{equation*}
\begin{split}
\sum_{i = 1}^n \eta_i(t + 1) \mu_i(t + 1) &= \sum_{i = 1}^n (D_{e_i - \mu(t + 1)} \varphi(\mu(t + 1)) + V_{\eta}(t + 1) ) \mu_i(t + 1) \\
&= \sum_{i = 1}^n \left( \mu_i(t + 1) D_{e_i - \mu(t + 1)} \varphi(\mu(t + 1))\right) + V_{\eta}(t + 1) \\
&= 0 + V_{\eta}(t + 1).
\end{split}
\end{equation*}
The last equality follows from the linearity of the directional derivative and the fact that
\[
\sum_{i = 1}^n \mu_i(t + 1) (e_i - \mu(t + 1)) = \mu(t + 1) - \mu(t + 1) = 0.
\]
By induction, the strategy $\eta$ is self-financing for all $t$.
\end{proof}

Now we develop a decomposition formula for the value process which was first established, in continuous time, in \cite{KR17}. In our discrete time setup, the finite variation term in \cite[Proposition 4.3]{KR17} becomes the accumulated sum of the Bregman divergence. Bregman divergence was introduced in \cite{B67} and is a fundamental concept in information geometry. The geometric consequences will be explained in Section \ref{sec:geometry}.

\begin{definition}[Bregman divergence] \label{def:Bregman}
The Bregman divergence of the concave function $\varphi$ in Definition \ref{def:additive.portfolio} is the non-negative functional $D_{B, \varphi}\left[ \cdot \mid \cdot \right] : \Delta_n \times \Delta_n \rightarrow [0, \infty)$ defined by
\begin{equation} \label{eqn:Bregman}
D_{B, \varphi}\left[q \mid p\right] = \nabla \varphi(p) \cdot (q - p) - \left( \varphi(q) - \varphi(p) \right).
\end{equation}
\end{definition}

\begin{theorem} [Additive decomposition] \label{thm:additive.decomp}
Consider the trading strategy $\eta$ generated additively by the concave function $\varphi$. Then the relative value satisfies the decomposition
\begin{equation} \label{eqn:additive.decomp}
V_{\eta}(t) - V_{\eta}(0) = \varphi(\mu(t)) - \varphi(\mu(0)) + \sum_{s = 0}^{t - 1} D_{B, \varphi}  \left[\mu(t + 1) \mid \mu(t) \right].
\end{equation}
\end{theorem}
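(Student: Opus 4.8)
The plan is to prove \eqref{eqn:additive.decomp} by a telescoping argument that reduces the whole statement to a single one-step identity. Since the value process satisfies $V_\eta(t+1) - V_\eta(t) = \eta(t) \cdot (\mu(t+1) - \mu(t))$ by \eqref{eqn:value}, and the $\varphi$-terms on the right-hand side of \eqref{eqn:additive.decomp} telescope to $\varphi(\mu(t)) - \varphi(\mu(0))$, it suffices to establish, for every $t$, the one-step decomposition
\[
V_\eta(t+1) - V_\eta(t) = \varphi(\mu(t+1)) - \varphi(\mu(t)) + D_{B, \varphi}\left[\mu(t+1) \mid \mu(t)\right].
\]
Summing this over $s = 0, \ldots, t-1$ then yields the claim directly.

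To establish the one-step identity, I would write $p = \mu(t)$ and $q = \mu(t+1)$ and substitute the definition \eqref{eqn:additive.portfolio} of $\eta_i(t)$ into the increment. Using linearity of the directional derivative, so that $D_{e_i - p}\varphi(p) = \nabla\varphi(p) \cdot (e_i - p)$, I would expand
\[
\eta(t) \cdot (q - p) = \sum_{i=1}^n \left( \nabla\varphi(p) \cdot (e_i - p) + V_\eta(t) \right)(q_i - p_i).
\]
The key structural observation is that $p$ and $q$ both lie in $\Delta_n$, so $\sum_{i=1}^n (q_i - p_i) = 0$. Consequently the term carrying the factor $V_\eta(t)$ — the very term that makes $\eta$ self-financing — drops out entirely, and the remaining sum collapses as $\nabla\varphi(p) \cdot \sum_{i=1}^n (e_i - p)(q_i - p_i) = \nabla\varphi(p) \cdot (q - p)$, since $\sum_i (e_i - p)(q_i - p_i) = (q - p) - p \sum_i(q_i - p_i) = q - p$. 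This is exactly the cancellation already recorded in \eqref{eqn:eta.identity}: adding a constant to every coordinate of $\eta(t)$ leaves the return against an increment of market weights unchanged.

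Finally, I would invoke the definition \eqref{eqn:Bregman} of the Bregman divergence, which rearranges to $\nabla\varphi(p) \cdot (q - p) = \varphi(q) - \varphi(p) + D_{B, \varphi}[q \mid p]$. Combining this with the previous step gives the one-step identity, and the telescoping sum completes the proof. The argument is essentially routine; the only genuine point is the bookkeeping of which terms vanish, and the single real idea is that the simplex constraint $\sum_i(q_i - p_i) = 0$ simultaneously annihilates the value-dependent self-financing correction $V_\eta(t)$ and the normalization term $\nabla\varphi(p)\cdot p$, leaving precisely the directional-derivative expression that matches the Bregman divergence.
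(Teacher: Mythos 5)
Your proposal is correct and follows essentially the same route as the paper's own proof: substitute the definition \eqref{eqn:additive.portfolio} into the one-step increment, use the simplex constraint $\sum_i(\mu_i(t+1)-\mu_i(t))=0$ to kill the $V_\eta(t)$ term and collapse the directional derivatives into $\nabla\varphi(\mu(t))\cdot(\mu(t+1)-\mu(t))$, rewrite via the Bregman divergence definition \eqref{eqn:Bregman}, and telescope. The only cosmetic difference is that you spell out the vector identity $\sum_i (e_i - p)(q_i - p_i) = q - p$ explicitly, which the paper leaves implicit.
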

\begin{proof}
For any $t$, we have
\begin{equation*}
\begin{split}
V_{\eta}(t + 1) - V_{\eta}(t) &= \sum_{i = 1}^n \eta_i (t) (\mu_i(t + 1) - \mu_i(t))\\
  &= \left( \sum_{i = 1}^n D_{e_i - \mu(t)} \varphi(\mu(t)) + V_{\eta}(t) \right) (\mu_i(t + 1) - \mu_i(t)) \\
  &= \sum_{i = 1}^n (\mu_i(t + 1) - \mu_i(t)) D_{e_i - \mu(t)} \varphi(\mu(t)) \\
  &= \nabla \varphi(\mu(t)) \cdot (\mu(t + 1) - \mu(t)) \\
  &= \varphi(\mu(t + 1)) - \varphi(\mu(t)) + D_{B, \varphi}\left[\mu(t + 1) \mid \mu(t)\right].
\end{split}
\end{equation*}
The formula \eqref{eqn:additive.decomp} is obtained by summing over time.
\end{proof}

We also give an example for the additive case. Let
\[
\varphi(p) = \frac{-1}{2} |p|^2 = \frac{-1}{2} \left(p_1^2 + \cdots + p_n^2\right)
\]
be half of the negative of the squared Euclidean norm. It generates the trading strategy $\eta(t)$ given by $
\eta_i(t) = |p|^2 - p_i + V_{\eta}(t)$. It is interesting to note that the Bregman divergence of $\varphi$ is half of the squared Euclidean distance:
\[
D_{B, \varphi}\left[q \mid p\right] = \frac{1}{2} \|p - q\|^2.
\]
For other examples of Bregman divergences see \cite[Chapter 1]{A16}.

\subsection{A framework of functional portfolio generation} \label{sec:financial.meaning}
Observe that both decompositions \eqref{eqn:multiplicative.decomp} and \eqref{eqn:additive.decomp} can be written in the form
\begin{equation} \label{eqn:general.decomp}
g(V_{\eta}(t)) - g(V_{\eta}(0)) = \varphi(\mu(t)) - \varphi(\mu(0)) + D\left[\mu(t + 1) \mid \mu(t) \right],
\end{equation}
where $g$, $\varphi$ and $D\left[ \cdot \mid \cdot\right]$ are suitable functions:
\begin{itemize}
\item (Multiplicative generation) $g(x) = \log x$ and $D\left[ \cdot \mid \cdot\right]$ is the $L$-divergence of the exponentially concave function $\varphi$.
\item (Additive generation) $g(x) = x$ and $D\left[ \cdot \mid \cdot\right]$ is the Bregman divergence of the concave function $\varphi$.
\end{itemize}

It is natural to ask if there exists other portfolio constructions that admit pathwise decompositions of the form \eqref{eqn:general.decomp}. To formulate this question we introduce the general concept of divergence. In information geometry, a divergence is a non-negative, distance-like quantity on a differentiable manifold; see \cite[Chapter 1]{A16}. Here we only consider divergences on the unit simplex.

\begin{definition}[Divergence on $\Delta_n$] \label{def:divergence}
A divergence on $\Delta_n$ is a non-negative functional $D[ \cdot : \cdot ] : \Delta_n \times \Delta_n \rightarrow [0, \infty)$ satisfying the following conditions:
\begin{itemize}
\item[(i)] $D[ q : p] = 0$ if and only if $p = q$.
\item[(ii)] It admits a quadratic approximation of the form
\begin{equation} \label{eqn:Riemannian}
D[p + \Delta p : p] = \frac{1}{2} \sum_{i, j = 1}^n g_{ij}(p) \Delta p_i \Delta p_j + O(| \Delta p |^3),
\end{equation}
where $|\Delta p| \rightarrow 0$, and the matrix $G(p) = \left( g_{ij}(p) \right)$ varies smoothly in $p$ and is strictly positive definite in the sense that
\begin{equation} \label{eqn:positive.definite}
\sum_{i, j = 1}^n g_{ij}(p) v_i v_j > 0
\end{equation}
for all vectors $v \in \mathbb{R}^n$ that are tangent to $\Delta_n$, i.e., $v_1 + \cdots + v_n = 0$.
\end{itemize}
If condition (i) is dropped and in \eqref{eqn:positive.definite} we do not strict inequality, we call $D\left[ \cdot \mid \cdot \right]$ a pseudo-divergence.
\end{definition}

Note that a divergence is not necessarily symmetric in the variables $p$ and $q$. In our context, this is intuitively clear because reversing the direction of time almost always leads to different financial outcomes. The $L$-divergence and the Bregman divergences are true divergences when the Euclidean Hessians of $\Phi = e^{\varphi}$ and $\varphi$ respectively are strictly positive definite (see \eqref{eqn:quadratic.metric} below for the matrix $g$).

\begin{definition} [General functional portfolio construction] \label{def:generation}
Let $\eta = \{\eta(t)\}_{t = 0}^{\infty}$ be a self-financing trading strategy whose value process is $V_{\eta}(t)$, and let $\varphi, g: \Delta_n \rightarrow \mathbb{R}$ be functions on $\Delta_n$ where $g$ is strictly increasing. We say that  $\eta$ is generated by $\varphi$ with scale function $g$ if there exists a pseudo-divergence $D[\cdot  : \cdot ]$ on $\Delta_n$ such that \eqref{eqn:general.decomp} holds for all market sequences $\{\mu(t)\}_{t = 0}^{\infty}$.
\end{definition}

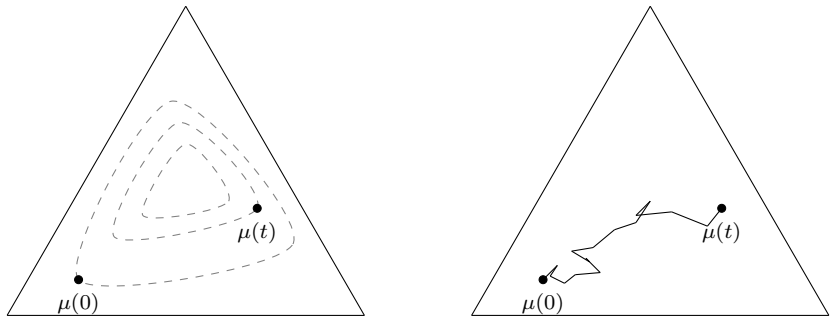
\begin{figure}[t!]
\begin{tikzpicture}[scale = 0.95]
\draw (0, 0) to (5, 0);
\draw (5, 0) to (2.5, 4.33);
\draw (2.5, 4.33) to (0, 0);

\draw [gray, dashed] plot [smooth cycle] coordinates {(1, 0.5) (4,1) (2.3, 3)};
\draw [gray, dashed] plot [smooth cycle] coordinates {(1.5, 1.1) (3.5, 1.5) (2.4, 2.7)};
\draw [gray, dashed] plot [smooth cycle] coordinates {(1.9, 1.4) (3.1, 1.6) (2.5, 2.4)};

\node[circle, draw=black, fill = black, inner sep=0pt, minimum size=3pt, label = below: {\footnotesize $\mu(0)$}] at (1, 0.5)  {};
\node[circle, draw=black, fill = black, inner sep=0pt, minimum size=3pt, label = below: {\footnotesize $\mu(t)$}] at (3.5, 1.5) {};

\draw (6.5, 0) to (11.5, 0);
\draw (11.5, 0) to (9, 4.33);
\draw (9, 4.33) to (6.5, 0);
\node[circle, draw=black, fill = black, inner sep=0pt, minimum size=3pt, label = below: {\footnotesize $\mu(0)$}] at (7.5, 0.5)  {};
\node[circle, draw=black, fill = black, inner sep=0pt, minimum size=3pt, label = below: {\footnotesize $\mu(t)$}] at (10, 1.5) {};

\draw (7.5, 0.5) to (7.7, 0.7);
\draw (7.7, 0.7) to (7.6, 0.55);
\draw (7.6, 0.55) to (7.8, 0.45);
\draw (7.8, 0.45) to (7.95, 0.57);
\draw (7.95, 0.57) to (8.3, 0.6);
\draw (8.3, 0.6) to (8.1, 0.8);
\draw (8.1, 0.8) to (8.15, 0.75);
\draw (8.15, 0.75) to (7.9, 0.9);
\draw (7.9, 0.9) to (8.2, 0.95);
\draw (8.2, 0.95) to (8.5, 1.2);
\draw (8.5, 1.2) to (8.8, 1.3);
\draw (8.8, 1.3) to (9, 1.6);
\draw (9, 1.6) to (8.8, 1.4);
\draw (8.8, 1.4) to (9.3, 1.45);
\draw (9.3, 1.45) to (9.8, 1.25);
\draw (9.8, 1.25) to (10, 1.5);


\end{tikzpicture}
\caption{Financial intuition of a functionally generated trading strategy. Left: Displacement of the market where the dotted curves are level sets of the generating function $\varphi$. Right: Volatility of the path $\{\mu(s)\}_{s = 0}^t$ as an accumulated divergence.}
\label{fig:decomp}
\end{figure}

We will characterize functional generation according to Definition \ref{def:generation} in Section \ref{sec:main}. Before closing this section we discuss briefly the financial intuition behind the decomposition \eqref{eqn:general.decomp}. The decomposition implies that the performance of the portfolio relative to the market can be attributed to two quantities. The first is the displacement of the market measured by the change of the function $\varphi(\mu(t))$. It depends only on the beginning location $\mu(0)$ and the current location $\mu(t)$ of the market. Note that change in $\varphi(\mu(t))$ is only caused by the component of market movement along the direction of $\nabla \varphi(\mu(t))$ which is perpendicular to the level set. In particular, a displacement along the same level set of $\varphi$ is not visible in this first term. The second term in \eqref{eqn:general.decomp} measures the volatility of the market, as it travels from $\mu(0)$ to $\mu(t)$, measured by the sum of $D\left[ \mu(s + 1) \mid \mu(s) \right]$ over time. Intuitively, the functionally generated trading strategy $\eta$ outperforms the market if and only if the volatility is greater than the displacement of the market. In this spirit, relative arbitrages can be constructed by imposing conditions on $\varphi(\mu(t))$ and the growth of the divergence term.

In the continuous time limit, the divergence term $\sum_{s = 0}^{t - 1} D\left[ \mu(s + 1) \mid \mu(s) \right]$ becomes a finite variation process given in terms of the quadratic variation process of the market weights. This follows from the quadratic approximation \eqref{eqn:Riemannian}. In particular, we have, in matrix notations,
\begin{equation} \label{eqn:quadratic.metric}
\begin{split}
D_{B, \varphi}\left[p + \Delta p \mid p\right] &= \frac{-1}{2} (\Delta p)^{\top} \Hess \varphi(p) (\Delta p) + O(|\Delta p|^3), \\
D_{L, \varphi}\left[p + \Delta p \mid p \right] &= \frac{-1}{2} (\Delta p)^{\top} \left( \Hess \varphi(p) + (\nabla \varphi(p))(\nabla \varphi(p) )^{\top}\right) (\Delta p) + O(|\Delta p|^3),
\end{split}
\end{equation}
where $\Hess \varphi$ is the Euclidean Hessian matrix. This gives \cite[Proposition 4.3]{KR17} and \cite[Theorem 3.1.5]{F02} in the continuous time setting.

\section{Connections with optimal transport and information geometry} \label{sec:geometry}
\subsection{Optimal transport}
Optimal transport is an extremely vast and deep field. For simplicity, we explain the main ideas using transport of discrete masses. (As shown in \cite{PW15} and \cite{V16} the results hold for the Monge-Kantorovich problem with general probability measures satisfying mild conditions.) For systematic expositions of optimal transport we refer the reader to the excellent books \cite{V03, V08} by Villani.

Let ${\mathcal{X}}$, ${\mathcal{Y}}$ be sets, and let $c: {\mathcal{X}} \times {\mathcal{Y}} \rightarrow \mathbb{R}$ be a cost function. Let $x_1, \ldots, x_N \in \mathcal{X}$ and $y_1, \ldots, y_N \in \mathcal{Y}$. In this simplified context, the (Monge) optimal transport problem attempts to  assign each $x_i$ to some $y_j$ so as to minimize the total transport cost:
\[
\min \sum_{i = 1}^N c(x_i, y_{\sigma(i)}).
\]
Here the minimization is over all bijections (i.e., permutations) $\sigma: \{1, \ldots, N\} \rightarrow \{1, \ldots, N\}$. Using the fact that a permutation is a product of disjoint cycles, it is not difficult to show (see \cite[Exercise 2.21]{V03}) that the assignment $x_i \rightarrow y_i$, $i = 1, \ldots, N$, is optimal if and only if it is $c$-cyclical monotone, i.e., for any $i_1, i_2, \ldots, i_m$ in $\{1, \ldots, N\}$ we have
\begin{equation} \label{eqn:ccm}
\sum_{k = 1}^m c(x_{i_k}, y_{i_k}) \leq \sum_{k = 1}^m c(x_{i_k}, y_{i_{k + 1}}),
\end{equation}
where by convention $i_{m + 1} := i_1$. Similarly, we can define the $c$-cyclical monotonicity of a subset $A \subset {\mathcal{X}} \times {\mathcal{Y}}$. Under very mild conditions (see \cite{V03, V08}), we can show that $c$-cyclical monotonicity of the transport plan is a sufficient and necessary condition for optimality in the Monge-Kantorovich problem.

The connection between functional portfolio generation and optimal transport is the following. Consider the decompositions \eqref{thm:fernholz} and \eqref{thm:additive.decomp}. If the market path satisfies $\mu(t) = \mu(0)$, so that $\{\mu(s)\}_{s = 0}^t$ is a cycle in $\Delta_n$, then we always have $V_{\eta}(t) \geq V_{\eta}(0)$. It turns out that this inequality is equivalent to $c$-cyclical monotonicity for appropriate choices of the cost function. This implies that functionally generated trading strategies can be obtained as solutions to certain optimal transport problems. The following theorem is taken from \cite{PW15} and \cite{V16}.

\begin{theorem}[Functional generation and optimal transport] \label{thm:transport} {\ }
\begin{enumerate}
\item[(i)] (Multiplicative generation \cite{PW15}) Let ${\mathcal{X}} = \Delta_n$ and ${\mathcal{Y}} = \overline{\Delta}_n$, and consider the cost function
\begin{equation} \label{eqn:multiplicative.cost}
c(p, q) = \log \left(p \cdot q \right), \quad (p, q) \in {\mathcal{X}} \times {\mathcal{Y}}.
\end{equation}
Suppose $\boldsymbol{\pi}$ is the portfolio map generated multiplicatively by the exponentially concave function $\varphi$ via \eqref{eqn:multiplicative.map}. Given $\boldsymbol{\pi}$, define the map $T: \Delta_n \rightarrow \overline{\Delta}_n$ by
\begin{equation} \label{eqn:multiplicative.transport}
T(p) = \left( \frac{\boldsymbol{\pi}_1(p)/p_1}{\sum_{j = 1}^n \boldsymbol{\pi}_j(p) / p_j}, \ldots, \frac{\boldsymbol{\pi}_n(p)/p_n}{\sum_{j = 1}^n \boldsymbol{\pi}_j(p) / p_j}\right).
\end{equation}
Then the graph of $T$ is $c$-cyclical monotone.
\item[(ii)] (Additive generation  \cite{V16}) Consider instead ${\mathcal{X}} = \Delta_n$ and ${\mathcal{Y}} = \mathbb{R}^n$, and the cost function
\begin{equation} \label{eqn:additive.cost}
c(p, v) = p \cdot v, \quad (p, v) \in {\mathcal{X}} \times {\mathcal{Y}}.
\end{equation}
Consider the trading strategy $\eta$ generated additively by the concave function $\varphi$ via \eqref{eqn:additive.portfolio}. Define the transport map $T: \Delta_n \rightarrow \mathbb{R}^n$ by
\begin{equation}
T(p) = \nabla \varphi(p).
\end{equation}
Then the graph of $T$ is $c$-cyclical monotone.
\end{enumerate}
\end{theorem}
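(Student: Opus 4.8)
The plan is to verify the defining $c$-cyclical monotonicity inequality \eqref{eqn:ccm} directly, in both cases exploiting the observation that the ``cross minus self'' cost equals the appropriate divergence plus a term in $\varphi$ that telescopes around any cycle. Fix a finite family $p_1, \ldots, p_m \in \Delta_n$ and set $p_{m+1} := p_1$. Since the graph of $T$ consists of the pairs $(p_k, T(p_k))$, $c$-cyclical monotonicity asks that $\sum_{k=1}^m c(p_k, T(p_k)) \le \sum_{k=1}^m c(p_k, T(p_{k+1}))$. By reindexing the cycle we have $\sum_{k=1}^m c(p_k, T(p_k)) = \sum_{k=1}^m c(p_{k+1}, T(p_{k+1}))$, so this inequality is equivalent to
\[
\sum_{k=1}^m \big( c(p_k, T(p_{k+1})) - c(p_{k+1}, T(p_{k+1})) \big) \ge 0.
\]
Thus it suffices to establish, for arbitrary $p, q \in \Delta_n$, an identity of the form $c(p, T(q)) - c(q, T(q)) = D[p \mid q] + \varphi(p) - \varphi(q)$ with $D$ the relevant divergence; summing over the cycle then cancels the telescoping $\varphi$-terms and leaves $\sum_{k=1}^m D[p_k \mid p_{k+1}] \ge 0$.

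For the additive case (ii) this is immediate. The cost $c(p, v) = p \cdot v$ is bilinear and $T(p) = \nabla \varphi(p)$, so
\[
c(p, T(q)) - c(q, T(q)) = \nabla \varphi(q) \cdot (p - q),
\]
which is exactly $D_{B, \varphi}[p \mid q] + \varphi(p) - \varphi(q)$ by Definition \ref{def:Bregman}. Summing over the cycle and invoking non-negativity of the Bregman divergence yields the claim.

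The real work is in the multiplicative case (i). Writing $w_i(q) = \boldsymbol{\pi}_i(q)/q_i = 1 + \nabla \varphi(q) \cdot (e_i - q)$, the map \eqref{eqn:multiplicative.transport} reads $T_i(q) = w_i(q) / \sum_j w_j(q)$. I would first record the key cancellation, using $\sum_i p_i = 1$ and $\sum_i p_i e_i = p$,
\[
\sum_{i=1}^n p_i\, w_i(q) = 1 + \nabla \varphi(q) \cdot (p - q),
\]
so that $p \cdot T(q) = \big(1 + \nabla \varphi(q) \cdot (p - q)\big) / \sum_j w_j(q)$ and, taking $p = q$, $q \cdot T(q) = 1 / \sum_j w_j(q)$. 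Taking logarithms, the unknown normalizer cancels in the difference:
\[
\log(p \cdot T(q)) - \log(q \cdot T(q)) = \log\big(1 + \nabla \varphi(q) \cdot (p - q)\big),
\]
which by Definition \ref{def:L.divergence} equals $D_{L, \varphi}[p \mid q] + \varphi(p) - \varphi(q)$. Summing over the cycle and using non-negativity of the $L$-divergence gives $c$-cyclical monotonicity.

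The main obstacle is precisely the normalizing denominator $\sum_j w_j(q)$ in the definition of $T$, which at first glance seems to obstruct any clean expression for $p \cdot T(q)$; the resolution is that this factor is independent of $p$ and therefore drops out of the difference of logarithms, leaving exactly the quantity that appears inside the $L$-divergence. Beyond this, the only care required is to keep the ordering of the divergence arguments $[p \mid q]$ consistent so that the signs of the telescoping $\varphi$-terms match the conventions of Definitions \ref{def:L.divergence} and \ref{def:Bregman}; everything else is routine bilinear and logarithmic bookkeeping.
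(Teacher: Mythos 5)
Your proof is correct and takes essentially the route the paper itself indicates: the paper gives no self-contained argument (it defers part (i) to \cite{PW15}, part (ii) to \cite{V16} and Rockafellar's theorem \cite{R70}), but its stated rationale --- that the pathwise decompositions force $V_{\eta}(t) \geq V_{\eta}(0)$ over cyclic market paths, which is equivalent to $c$-cyclical monotonicity --- is exactly what your computation makes explicit. Indeed your key identity $c(p, T(q)) - c(q, T(q)) = D\left[p \mid q\right] + \varphi(p) - \varphi(q)$ is precisely the one-step form of the decompositions \eqref{eqn:multiplicative.decomp} and \eqref{eqn:additive.decomp} (with the normalizer cancellation in case (i) being the same algebra as in the proof of Theorem \ref{thm:fernholz}), so summing around a cycle and invoking non-negativity of the $L$- and Bregman divergences is the paper's argument written out in full.
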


Let us note that in Theorem \ref{thm:transport}(ii), once the cost function is identified, the statement about $c$-cyclical monotonicity follows immediately from Rockafellar's theorem (see \cite[Section 24]{R70}) which characterizes the differentials of convex/concave functions.

\subsection{Geometry of additively generated portfolio}
That functionally generated portfolio is related to information geometry was first discovered in \cite{PW16} for multiplicatively generated portfolios. In that paper we studied the geometric structure (in the sense of \cite[Chapter 6]{A16}) induced by a given $L$-divergence of an exponentially concave function on the simplex $\Delta_n$ and proved a generalized Pythagorean theorem. Here we give the analogue for additive generation; this case is significantly easier as the geometry of Bregman divergence has been well studied.

Let $\eta$ be the trading strategy generated additively by a smooth concave function $\varphi$ on $\Delta_n$, so that $\Hess \varphi$ is strictly positive definite. In the market model discussed in Section \ref{sec:market} the portfolio is traded at every time point. More generally, we may trade according to an increasing sequence of stopping times. An important but challenging financial problem is to determine the optimal rebalancing frequency. For the case of three time points we can characterize the solution geometrically.

Let $t_0 < t_1 < t_2$ and consider two ways of implementing the strategy $\eta$:
\begin{itemize}
\item[(a)] Rebalance only at time $t_0$.
\item[(b)] Rebalance at times $t_1$ and $t_2$.
\end{itemize}
Then, at time $t_2$, the corresponding portfolio values are
\begin{equation*}
\begin{split}
V^{(a)}_{\eta}(t_2) - V^{(a)}_{\eta}(t_0) &= \varphi(\mu(t_2)) - \varphi(\mu(t_0)) + D\left[ \mu(t_2) \mid \mu(t_0)\right],\\
V^{(b)}_{\eta}(t_2) - V^{(b)}_{\eta}(t_0) &= \varphi(\mu(t_2)) - \varphi(\mu(t_1)) +  D\left[ \mu(t_2) \mid \mu(t_1)\right] +  D\left[ \mu(t_1) \mid \mu(t_0)\right],
\end{split}
\end{equation*}
where $D = D_{B, \varphi}$ is the Bregman divergence of $\varphi$. Assuming $V_{\eta}^{(a)}(t_0) = V_{\eta}^{(b)}(t_0)$, we see that method (b) earns more than method (a) over the period $[t_0, t_2]$ if and only if
\begin{equation} \label{eqn:pyth}
D\left[ \mu(t_2) \mid \mu(t_1)\right] +  D\left[ \mu(t_1) \mid \mu(t_0)\right] \geq D\left[\mu(t_2) \mid \mu(t_0)\right].
\end{equation}

We give a geometric condition on the triplet $(p, q, r) = (\mu(t_0), \mu(t_1), \mu(t_2)) \in \left( \Delta_n  \right)^3$ for \eqref{eqn:pyth} to hold. This will be stated in terms of the dually flat geometry of the Bregman divergence $\varphi$; for more details and motivations see \cite[Chapter 1]{A16}.

\begin{definition}[Geometry of Bregman divergence]
Let $M = \Delta_n$ considered as a smooth $(n - 1)$-dimensional smooth manifold without boundary.
\begin{enumerate}
\item[(i)] (Primal coordinate) The identity map $p = (p_1, \ldots, p_n) \in M \mapsto p = (p_1, \ldots, p_n)$ is called the primal coordinate system.
\item[(ii)] (Dual coordinate) The map $p \in M \mapsto p^* = \nabla \varphi(p)$ is called the dual coordinate system. By properties of the Legendre transform this is a diffeomorphism from $M$ to its range which is also convex. The inverse map is given by $p = \nabla \varphi^*(p^*)$ where $\varphi^*$ is the concave conjugate of $\varphi$.
\item[(iii)] (Primal geodesic) Given $p$ and $q$, the primal geodesic from $q$ to $p$ is the straight line
\[
\gamma(t) = (1 - t) q + t p
\]
when expressed in primal coordinates.
\item[(iv)] (Dual geodesic) Given $q$ and $r$, the dual geodesic from $q$ to $r$ is the straight line
\[
\gamma^*(t) = (1 - t) q^* + t r^*
\]
when expressed in dual coordinates.
\item[(v)] (Riemannian metric) Let $u$ and $v$ be tangent to $\Delta_n$ and expressed in primal coordinates, and let $p \in \Delta_n$. Their Riemannian inner product at $p$ is defined by
    \begin{equation} \label{eqn:inner.product}
    \langle u, v \rangle_p = u^{\top} \left(-\Hess \varphi(p)\right) v.
    \end{equation}
    We say that $u$ and $v$ are orthogonal at $p$ if $\langle u, v \rangle_p = 0$.
\end{enumerate}
\end{definition}

We say that this geometry is dually flat because the straight lines in the $p$ space and those in the $p^*$ space give two affine structures on $M$; the two coordinate systems are dual to each other through the Legendre transform. In \cite{PW16}, this dual structure is extended to the logarithmic cost function where the Legendre transform takes the form of \eqref{eqn:multiplicative.transport}.

\begin{figure}[t!]
\centering
\begin{tikzpicture}[scale = 0.5]
\draw[thick, blue] (0, 0) to[out=200, in=-30] (-10, 1);
\draw[thick, red] (-10, 1) to (-6, 6);
\node [right] at (0, 0) {$r$};
\node [left] at (-10, 1) {$q$};
\node [above] at (-6, 6) {$p$};
\node [blue, below] at (-5, -1) {dual geodesic};
\node [red, left] at (-7.7, 4) {primal geodesic};

\draw[->, thick] (-8, 2) to (-9.7, 1.05);
\node [right] at (-8, 2) {{\footnotesize $\perp$ under the Riemannian metric}};

\end{tikzpicture}
\caption{Generalized Pythagorean theorem. This is drawn in primal coordinates, so that the primal geodesic is a straight line.}
\label{fig:pyth}
\end{figure}
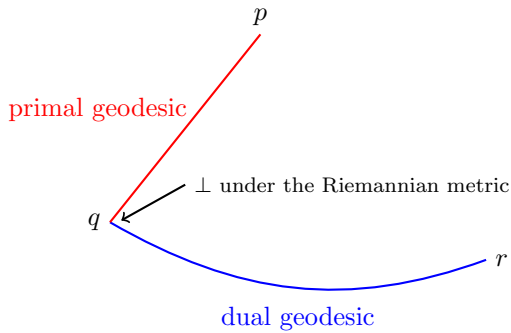

\begin{theorem}[Generalized Pythagorean theorem]
Let $p$, $q$ and $r$ be distinct points in $\Delta_n$. Consider the primal geodesic $\gamma(t)$ from $q$ to $p$ and the dual geodesic $\gamma^*(t)$ from $q$ to $r$. Then the inequality \eqref{eqn:pyth} holds if and only if the Riemannian angle between the velocity vectors $\dot{\gamma}(0)$ and $\dot{\gamma^*}(0)$ at $q$ is less than $\frac{\pi}{2}$. Moreover, equality holds if and only if the two geodesics meet orthogonally at $q$ (see Figure \ref{fig:pyth}).
\end{theorem}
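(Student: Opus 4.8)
The plan is to reduce the finite inequality \eqref{eqn:pyth} to an exact algebraic identity for the Bregman divergence and then read that identity as a statement about a Riemannian inner product at $q$. Writing $(p,q,r)=(\mu(t_0),\mu(t_1),\mu(t_2))$ and expanding the three divergences from \eqref{eqn:Bregman}, the values of $\varphi$ cancel completely and one is left with the three-point identity
\[
D_{B,\varphi}[r\mid q] + D_{B,\varphi}[q\mid p] - D_{B,\varphi}[r\mid p] = \left(\nabla\varphi(q) - \nabla\varphi(p)\right)\cdot(r - q).
\]
This step is a routine cancellation and I anticipate no difficulty with it; it already reduces the whole theorem to understanding the sign of the right-hand side, since \eqref{eqn:pyth} holds exactly when it is nonnegative and holds with equality exactly when it vanishes.

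The crux is to recognize the bilinear expression $\left(\nabla\varphi(q)-\nabla\varphi(p)\right)\cdot(r-q)$ as a Riemannian inner product \eqref{eqn:inner.product} of two geodesic velocities at $q$. The mechanism is the duality between the two coordinate systems. The primal geodesic has velocity equal to an endpoint difference in primal coordinates, here $r-q$; the velocity of a dual geodesic is obtained by transporting its dual-coordinate endpoint difference $\nabla\varphi(p)-\nabla\varphi(q)$ through the inverse Legendre map, whose derivative at $q$ is $\Hess\varphi^*(\nabla\varphi(q)) = \left(\Hess\varphi(q)\right)^{-1}$. Substituting into the metric $\langle u,v\rangle_q = u^{\top}\left(-\Hess\varphi(q)\right)v$, the inverse Hessian cancels the Hessian and leaves exactly the Euclidean pairing above, so that, writing $v_{\mathrm{pr}} = r-q$ and $v_{\mathrm{du}} = \left(\Hess\varphi(q)\right)^{-1}\left(\nabla\varphi(p)-\nabla\varphi(q)\right)$,
\[
\left(\nabla\varphi(q)-\nabla\varphi(p)\right)\cdot(r-q) = \langle v_{\mathrm{pr}}, v_{\mathrm{du}}\rangle_q.
\]
The decisive feature is that this identification is exact, with no $O(|\Delta p|^3)$ remainder as in \eqref{eqn:Riemannian}: because geodesics are affine in their respective coordinates their velocities are constant endpoint differences, so the infinitesimal angle at $q$ governs the finite inequality \eqref{eqn:pyth} exactly, not merely to leading order.

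With the identity in hand the conclusion is immediate. Since $p$, $q$, $r$ are distinct and both the primal coordinate map and $\nabla\varphi$ are injective, the vectors $v_{\mathrm{pr}}$ and $v_{\mathrm{du}}$ are nonzero, so the Riemannian angle $\theta$ between them is well defined and $\langle v_{\mathrm{pr}}, v_{\mathrm{du}}\rangle_q = \|v_{\mathrm{pr}}\|_q\,\|v_{\mathrm{du}}\|_q\cos\theta$. Hence the sign of the inner product equals the sign of $\cos\theta$: the inequality \eqref{eqn:pyth} holds iff $\cos\theta\geq 0$, i.e.\ $\theta\leq\frac{\pi}{2}$ (with the strict inequality corresponding to $\theta<\frac{\pi}{2}$), and it holds with equality iff $\cos\theta=0$, i.e.\ the primal geodesic and the dual geodesic meet orthogonally at $q$.

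The main obstacle I expect is not a hard estimate but the bookkeeping in the middle step. One must respect that $\varphi$ is concave, so the metric carries the sign $-\Hess\varphi$ and $\nabla\varphi$ reverses orientation relative to the convex convention; and, above all, one must keep straight which endpoint difference is the velocity of the primal geodesic and which is that of the dual geodesic. This matters because the superficially similar expression $\left(\nabla\varphi(r)-\nabla\varphi(q)\right)\cdot(q-p)$, obtained by pairing the primal geodesic through $p$ with the dual geodesic through $r$, is genuinely inequivalent to the three-point quantity: both can be written as $-(q-p)^{\top}M(r-q)$ for two \emph{different} positive-definite averages $M$ of $-\Hess\varphi$ along the two segments, so they need not even share a sign. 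Verifying that the correct pairing reproduces exactly $\left(\nabla\varphi(q)-\nabla\varphi(p)\right)\cdot(r-q)$, and not its sibling, is the step I would treat with the most care.
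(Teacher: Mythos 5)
Your computations are all correct, and since the paper's own ``proof'' of this theorem is nothing but the citation to \cite[Theorem 1.2]{A16}, a self-contained derivation is genuinely worth having. The three-point expansion of \eqref{eqn:Bregman} does give
\[
D_{B,\varphi}\left[r \mid q\right] + D_{B,\varphi}\left[q \mid p\right] - D_{B,\varphi}\left[r \mid p\right] = \left(\nabla\varphi(q) - \nabla\varphi(p)\right)\cdot(r - q),
\]
and your identification of the right-hand side, via $\Hess\varphi^*(\nabla\varphi(q)) = \left(\Hess\varphi(q)\right)^{-1}$ and the metric \eqref{eqn:inner.product}, as the exact (not merely second-order) Riemannian inner product $\langle v_{\mathrm{pr}}, v_{\mathrm{du}}\rangle_q$ is right, signs included.

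The gap is that what you have proved is not the printed statement. The theorem declares $\gamma$ to be the primal geodesic from $q$ to $p$ and $\gamma^*$ the dual geodesic from $q$ to $r$; hence $\dot{\gamma}(0) = p - q$, $\dot{\gamma^*}(0) = \left(\Hess\varphi(q)\right)^{-1}\left(\nabla\varphi(r) - \nabla\varphi(q)\right)$, and the inner product the theorem refers to is $\left(\nabla\varphi(r) - \nabla\varphi(q)\right)\cdot(q - p)$ --- precisely the ``sibling'' expression you set aside in your final paragraph. Your $v_{\mathrm{pr}} = r - q$ and $v_{\mathrm{du}} = \left(\Hess\varphi(q)\right)^{-1}\left(\nabla\varphi(p) - \nabla\varphi(q)\right)$ are instead the velocities of the primal geodesic from $q$ to $r$ and the dual geodesic from $q$ to $p$: the assignments are interchanged relative to the statement. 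By your own averaged-Hessian argument the two pairings are genuinely inequivalent (writing $\nabla\varphi(q)-\nabla\varphi(p) = M_{qp}(p-q)$ and $\nabla\varphi(q)-\nabla\varphi(r) = M_{qr}(r-q)$ with different positive-definite averages $M_{qp} \neq M_{qr}$, the signs can disagree unless $\Hess\varphi$ is constant), so this cannot be waved away as bookkeeping: your proof and the literal statement are incompatible, and it is the statement that is at fault. What your identity actually establishes is that \eqref{eqn:pyth} holds if and only if the angle between the primal geodesic from $q$ to $r$ and the dual geodesic from $q$ to $p$ is at most $\frac{\pi}{2}$; the pairing as printed (and as drawn in Figure \ref{fig:pyth}) instead characterizes the time-reversed inequality $D\left[p \mid q\right] + D\left[q \mid r\right] \geq D\left[r \mid p\right]$ with $p$ and $r$ exchanged. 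To close the argument you must say this explicitly: either prove the theorem with the geodesic assignments corrected (swap $p$ and $r$, or ``primal'' and ``dual''), recording the discrepancy as an erratum, or reconcile your pairing with the stated one --- which, as you yourself show, is impossible.
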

\begin{proof}
See \cite[Theorem 1.2]{A16}.
\end{proof}

\section{Characterizing functional portfolio generation} \label{sec:main}
In this section we characterize functional portfolio generations according to Definition \ref{def:generation}. Throughout this section we let $\eta$ be a functionally generated trading strategy as in Definition \ref{def:generation}. We assume that the scale function $g$ is smooth and $g'(x) > 0$ for all $x$. We also require that the domain of $g$ contains the positive real line $(0, \infty)$. Furthermore, we assume that $\eta$ is non-trivial in the sense that for all paths of the market up to time $t$, the profit or loss
\[
V_{\eta}(t + 1) - V_{\eta}(t) = \eta(t) \cdot (\mu(t + 1) - \mu(t))
\]
is not identically zero as a function of $\mu(t + 1)$.

\subsection{The scale function}
We first characterize the possible shapes of the scale function.

\begin{proposition} \label{prop:scale.function}
The scale function $g$ admits a non-trivial functionally generated trading strategy only if it has one of the following forms. Either
\begin{equation} \label{eqn:shape1}
g(x) = c_1 x + c_2
\end{equation}
where $c_1 > 0$ and $c_2 \in \mathbb{R}$, or
\begin{equation} \label{eqn:shape2}
g(x) = c_2 \log(c_1 + x) + c_3
\end{equation}
where $c_1 \geq 0$, $c_2 > 0$ and $c_3 \in \mathbb{R}$.
\end{proposition}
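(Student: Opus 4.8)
The plan is to distil from the defining decomposition \eqref{eqn:general.decomp} a single–step functional equation relating $g$, $\varphi$ and the divergence $D$, and then to separate the dependence on the portfolio value from the dependence on the market weight by a Taylor expansion. Since \eqref{eqn:general.decomp} is required to hold for \emph{every} market sequence, considering it over one time step (equivalently, subtracting the identities at consecutive times) yields, at each time $t$,
\[
g\bigl(v + \eta(t)\cdot(q - p)\bigr) - g(v) = \varphi(q) - \varphi(p) + D[q \mid p],
\]
where I abbreviate $p = \mu(t)$, $q = \mu(t+1)$, $v = V_\eta(t)$, and use $V_\eta(t+1) = v + \eta(t)\cdot(q-p)$. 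The decisive structural feature is that the right-hand side depends only on the pair $(p,q)$, whereas the left-hand side also carries the value $v$ and the position $\eta(t)$.

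First I would fix $p$ and expand both sides in the tangent increment $\Delta p = q - p$, which satisfies $\sum_i \Delta p_i = 0$. By \eqref{eqn:Riemannian} the divergence contributes no linear term, so matching the terms linear in $\Delta p$ gives $g'(v)\,\eta(t)\cdot\Delta p = \nabla\varphi(p)\cdot\Delta p$ for all tangent $\Delta p$; since $g'(v) > 0$ this pins down the tangential part of $\eta(t)$ and shows $\eta(t)\cdot\Delta p = g'(v)^{-1}\,\nabla\varphi(p)\cdot\Delta p$. Substituting this into the quadratic term and invoking the quadratic approximation \eqref{eqn:Riemannian} of $D$, the second-order matching becomes
\[
\frac{g''(v)}{g'(v)^2}\bigl(\nabla\varphi(p)\cdot\Delta p\bigr)^2 = \Delta p^{\top}\bigl(\Hess\varphi(p) + G(p)\bigr)\Delta p
\]
for all tangent $\Delta p$, where $G(p)$ is the metric of $D$. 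Here the right-hand side is independent of $v$, while the left-hand side depends on $v$ only through the scalar $k(v) := g''(v)/g'(v)^2$.

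The heart of the argument is to conclude that $k(v)$ is constant. By the non-triviality hypothesis, $\eta(t)\cdot\Delta p$ is not identically zero in $\Delta p$, so the first-order relation forces $\nabla\varphi(p)\cdot\Delta p \not\equiv 0$; choosing a tangent direction along which $\bigl(\nabla\varphi(p)\cdot\Delta p\bigr)^2 \neq 0$, the displayed identity expresses $k(v)$ as a quantity depending only on $p$. I would then argue that, for a fixed position $p$, the value $v = V_\eta(t)$ ranges over a nondegenerate interval as the market path leading to $p$ is varied (equivalently, as the admissible initial value is varied, its domain containing $(0,\infty)$). This is precisely where the non-triviality assumption and the hypothesis on $\dom g$ are used, and I expect it to be the main obstacle: one must verify that the reachable set of values at a fixed market location is rich enough to separate $v$ from $p$. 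Granting this, $k(v)$ must be constant throughout $(0,\infty)$.

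Finally I would solve the ordinary differential equation $g''/(g')^2 = c$ subject to $g' > 0$ and $\dom g \supseteq (0,\infty)$. Writing $(1/g')' = -c$ gives $g'(x) = 1/(A - cx)$. If $c = 0$ then $g$ is affine with positive slope, which is \eqref{eqn:shape1}. If $c \neq 0$, integrating produces a logarithm; positivity of $g'$ on all of $(0,\infty)$ rules out $c > 0$, since then $A - cx$ changes sign inside $(0,\infty)$, and forces $A \geq 0$ when $c < 0$. Reparametrising $c = -1/c_2$ with $c_2 > 0$ and $c_1 = c_2 A \geq 0$ recovers exactly \eqref{eqn:shape2}, completing the characterisation.
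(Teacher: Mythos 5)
Your proposal is correct, and its skeleton is the same as the paper's: both extract from the one-step version of \eqref{eqn:general.decomp} the exact identity $\eta(t)\cdot w = \frac{1}{g'(V_{\eta}(t))}\nabla\varphi(\mu(t))\cdot w$ for tangent $w$ (the paper's \eqref{eqn:eta.as.gradient}), and both then separate the dependence on the value $v$ from the dependence on $(p,q)$ to force an ODE on $g$. Where you differ is in how the ODE is obtained and which ODE appears. The paper notes that $g\bigl(x + \tfrac{\delta}{g'(x)}\bigr) - g(x)$ must be independent of $x = V_{\eta}(t)$, differentiates once in $x$ and twice in $\delta$, and arrives at the third-order equation $g'g''' = 2(g'')^2$ in \eqref{eqn:ODE1}, whose solution families are then recorded in a lemma stated without proof. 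You instead push the Taylor expansion to second order and invoke the quadratic approximation \eqref{eqn:Riemannian} of the pseudo-divergence, obtaining $g''(v)/g'(v)^2 = \bigl(\Delta p^{\top}(\Hess\varphi(p)+G(p))\Delta p\bigr)/\bigl(\nabla\varphi(p)\cdot\Delta p\bigr)^2$, whence $k(v)=g''/(g')^2$ is constant; this is exactly the first integral of \eqref{eqn:ODE1}, since $(g''/(g')^2)' = (g'g''' - 2(g'')^2)/(g')^3$. Your route trades the paper's third derivative for an explicit use of the divergence metric $G(p)$, and it makes the final integration, which the paper omits, a one-line computation of $1/g'$ followed by the same domain and monotonicity bookkeeping. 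The one caveat, which you yourself flag, is that concluding $k(v)$ is constant on all of $(0,\infty)$ requires the reachable values $V_{\eta}(t)$ at a fixed market position to form a sufficiently rich set (e.g.\ by varying the history leading to $p$); this is genuinely needed, but it is not a defect relative to the paper, whose proof makes the identical tacit assumption the moment it differentiates \eqref{eqn:invariance} with respect to $x$. So your argument is complete at the same level of rigor as the paper's and is arguably more economical in the ODE step.
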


We will prove Proposition \ref{prop:scale.function} with several lemmas. First we observe that the decomposition \eqref{eqn:general.decomp}  already implies a formula for the trading strategy at each point in time.

\begin{lemma}
For any tangent vector $v$ of $\Delta_n$ (i.e., $v_1 + \cdots + v_n = 0$) we have
\begin{equation} \label{eqn:eta.as.gradient}
\eta(t) \cdot v  =  \frac{1}{g'(V_{\eta}(t))} \nabla \varphi(\mu(t)) \cdot  v.
\end{equation}
In particular, for each $i = 1, \ldots, n$, we have
\begin{equation} \label{eqn:eta.formula}
\eta_i(t) = \frac{1}{g'(V_{\eta}(t))} D_{e_i - \mu(t)} \varphi(\mu(t)) + V_{\eta}(t).
\end{equation}
\end{lemma}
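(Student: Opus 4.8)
The plan is to extract a one-step version of the decomposition and then differentiate it in the future market weight. The decomposition \eqref{eqn:general.decomp}, being a cumulative statement valid for all $t$, yields upon differencing consecutive times the single-period identity
\[
g(V_{\eta}(t+1)) - g(V_{\eta}(t)) = \varphi(\mu(t+1)) - \varphi(\mu(t)) + D[\mu(t+1) \mid \mu(t)].
\]
The crucial observation is that, by the definition of the value process \eqref{eqn:value}, we have $V_{\eta}(t+1) = V_{\eta}(t) + \eta(t)\cdot(\mu(t+1) - \mu(t))$, where $\eta(t)$, $\mu(t)$ and $V_{\eta}(t)$ are all determined by information available at time $t$. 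Since the decomposition must hold for every market sequence, I may freeze $p := \mu(t)$, $V := V_{\eta}(t)$ and $\eta(t)$, and treat $q := \mu(t+1)$ as a free variable ranging over $\Delta_n$. The identity then reads
\[
g\bigl(V + \eta(t)\cdot(q - p)\bigr) - g(V) = \varphi(q) - \varphi(p) + D[q \mid p], \qquad q \in \Delta_n.
\]

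Next I would differentiate this identity in a tangent direction $v$ (so $v_1 + \cdots + v_n = 0$) and evaluate at $q = p$. The left-hand side contributes $g'(V)\,\eta(t)\cdot v$ by the chain rule. On the right-hand side, the $\varphi$-term contributes $\nabla\varphi(p)\cdot v$, while the divergence term contributes nothing: because $D[\cdot \mid p]$ admits the quadratic expansion \eqref{eqn:Riemannian} around its minimum at $q = p$, its first-order derivative at the diagonal vanishes. This gives $g'(V)\,\eta(t)\cdot v = \nabla\varphi(p)\cdot v$, which is exactly \eqref{eqn:eta.as.gradient} after dividing by $g'(V) > 0$. The openness of $\Delta_n$ legitimizes the differentiation, since $p + \varepsilon v \in \Delta_n$ for tangent $v$ and small $\varepsilon$.

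Finally, to upgrade \eqref{eqn:eta.as.gradient} to the coordinate formula \eqref{eqn:eta.formula}, I would observe that \eqref{eqn:eta.as.gradient} pins down $\eta(t)$ only up to a multiple of the all-ones vector $\mathbf{1}$, since $\mathbf{1}\cdot v = 0$ for every tangent $v$. To fix this remaining degree of freedom I use the self-financing structure: an induction as in the proof of Lemma \ref{lem:self.finance}, starting from $V_{\eta}(0) = \eta(0)\cdot\mu(0)$ and using \eqref{eqn:self.financed}, shows that $V_{\eta}(t) = \eta(t)\cdot\mu(t)$ for all $t$. Writing $\eta(t) = \frac{1}{g'(V)}\nabla\varphi(p) + \lambda\mathbf{1}$ and taking the inner product with $p = \mu(t)$ (using $\mathbf{1}\cdot p = 1$) determines $\lambda = V - \frac{1}{g'(V)}\nabla\varphi(p)\cdot p$; substituting back and rewriting $(\nabla\varphi(p))_i - \nabla\varphi(p)\cdot p = \nabla\varphi(p)\cdot(e_i - p) = D_{e_i - p}\varphi(p)$ produces \eqref{eqn:eta.formula}. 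The main obstacle is not any single calculation but keeping track of which information is frozen versus varied, and recognizing that the differentiated identity \eqref{eqn:eta.as.gradient} constrains only the tangential part of $\eta(t)$; the self-financing identity $V_{\eta}(t) = \eta(t)\cdot\mu(t)$ is genuinely needed, as it is what ties the additive ``$+V_{\eta}(t)$'' term in \eqref{eqn:eta.formula} to the normal component.
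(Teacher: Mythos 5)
Your proof is correct and follows essentially the same route as the paper: difference the decomposition \eqref{eqn:general.decomp} to get the one-step identity, exploit that the pseudo-divergence has vanishing first-order behavior at the diagonal (via \eqref{eqn:Riemannian}) to differentiate in the direction $v$, and then recover the coordinate formula \eqref{eqn:eta.formula}. The only difference is one of care, not of method: the paper's last step (``letting $v = e_i - \mu(t)$'') silently uses $\eta(t)\cdot\mu(t) = V_{\eta}(t)$, which you rightly identify as the normalization of the normal component and justify by the self-financing induction.
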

\begin{proof}
From \eqref{eqn:general.decomp} we have the identity
\begin{equation} \label{eqn:1step.decomp}
g(V_{\eta}(t + 1)) - g(V_{\eta}(t)) = \varphi(\mu(t + 1)) - \varphi(\mu(t)) + D\left[ \mu(t + 1) \mid \mu(t) \right]
\end{equation}
which holds for all values of $\mu(t + 1)$. Moreover, we have
\[
V_{\eta}(t + 1) = V_{\eta}(t) + \eta(t) \cdot (\mu(t + 1) - \mu(t)).
\]
Now let $\mu(t + 1) - \mu(t) = \delta v$, $\delta > 0$, and compute the first order approximation of both sides of \eqref{eqn:1step.decomp}. Since $D\left[ \cdot \mid \cdot \right]$ is a pseudo-divergence, by \eqref{eqn:Riemannian} its first order approximation vanishes. Evaluating the derivatives and dividing by $\delta > 0$, we obtain \eqref{eqn:eta.as.gradient}.

Letting $v = e_i - \mu(t)$ in \eqref{eqn:eta.formula} for $i = 1, \ldots, n$, we get the explicit formula \eqref{eqn:eta.formula}.
\end{proof}

Observe that \eqref{eqn:eta.as.gradient} reduces to \eqref{eqn:eta.gradient} when $g(x) = x$, and to \eqref{eqn:multiplicative.map} when $g(x) = \log x$. Also, we note that the trading strategy depends only on $\mu(t)$ and the current value $V_{\eta}(t)$. Putting $v = \mu(t + 1) - \mu(t)$ in \eqref{eqn:eta.as.gradient}, we have
\begin{equation} \label{eqn:for.use}
V_{\eta}(t + 1) - V_{\eta}(t) = \frac{1}{g'(V_{\eta}(t))}\nabla \varphi(\mu(t)) \cdot (\mu(t + 1) - \mu(t)).
\end{equation}

Consider the expression
\begin{equation} \label{eqn:gV.expression}
\begin{split}
& g(V_{\eta}(t + 1)) - g(V_{\eta}(t))\\
&= g(V_{\eta}(t) + (V_{\eta}(t + 1) - V_{\eta}(t))) - g(V_{\eta}(t)) \\
  &= g\left(V_{\eta}(t) + \frac{1}{g'(V_{\eta})}  \nabla \varphi(\mu(t)) \cdot ( \mu(t + 1) - \mu(t))\rangle)\right) - g(V_{\eta}(t)).
\end{split}
\end{equation}
By \eqref{eqn:general.decomp}, this equals
\[
\varphi(\mu(t + 1)) - \varphi(\mu(t)) + D[\mu(t + 1) : \mu(t)],
\]
which is a function of $\mu(t)$ and $\mu(t + 1)$ only. Thus, the expression in \eqref{eqn:gV.expression} does not depend on the current portfolio value $V_{\eta}(t)$. From this observation, we will derive a differential equation satisfied by $g$.

\begin{lemma}
The scale function $g$ satisfies the third order nonlinear ODE
\begin{equation} \label{eqn:ODE1}
g' g''' =2  (g'')^2.
\end{equation}
\end{lemma}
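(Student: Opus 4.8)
The plan is to exploit the fact, established just before the lemma, that the expression in \eqref{eqn:gV.expression},
\[
F(V, w) := g\!\left(V + \frac{w}{g'(V)}\right) - g(V),
\]
where I abbreviate $V = V_{\eta}(t)$ and $w = \nabla \varphi(\mu(t)) \cdot (\mu(t + 1) - \mu(t))$, depends on $\mu(t)$ and $\mu(t + 1)$ only, and in fact on $\mu(t + 1)$ only through the scalar $w$. Thus $F$ is independent of the current value $V$. The idea is to convert this single scalar identity into a differential equation for $g$ by letting $V$ and $w$ vary independently and then isolating the dependence on $w$.

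First I would justify that $V$ and $w$ genuinely range over a two-dimensional open set. Taking $t = 0$, the initial value $V = V_{\eta}(0) = \eta(0) \cdot \mu(0)$ is a free initial condition: arguing as in Lemma \ref{lem:self.finance}, one checks that the generation formula \eqref{eqn:eta.formula} is consistent with any prescribed $V_{\eta}(0)$, so $V$ may be any point in the domain of $g$. With $\mu(0)$ held fixed, the non-triviality assumption guarantees that the tangential part of $\nabla \varphi(\mu(0))$ is nonzero, so as $\mu(1)$ varies over $\Delta_n$ the scalar $w$ sweeps out an open interval containing $0$, and this interval is unaffected by the choice of $V$. Hence $F(V, w)$ is defined and independent of $V$ for $(V, w)$ ranging over an open set of the form $I \times (-\delta, \delta)$.

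Next I would extract the ODE by Taylor-expanding $F$ in $w$ about $w = 0$. Since $g$ is smooth with $g' > 0$, the quantity $1/g'(V)$ is smooth, and a direct expansion gives
\[
F(V, w) = w + \frac{g''(V)}{2\,g'(V)^2}\, w^2 + \frac{g'''(V)}{6\,g'(V)^3}\, w^3 + O(w^4).
\]
Because $F$ is independent of $V$, every Taylor coefficient in $w$ must be constant in $V$. Applying this to the coefficient of $w^2$ gives $\frac{d}{dV}\bigl[g''/g'^2\bigr] = 0$; a short computation rewrites this as $\bigl(g''' g' - 2(g'')^2\bigr)/g'^3 = 0$, and multiplying through by $g'^3 > 0$ yields exactly \eqref{eqn:ODE1}, namely $g' g''' = 2 (g'')^2$.

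The main obstacle is the first step: making rigorous the claim that $V$ and $w$ can be varied independently, i.e.\ that the quantity in \eqref{eqn:gV.expression} really is a function of $w$ alone while $V$ is a genuinely free parameter rather than being pinned down by the path. Once the identity $F(V, w) = \tilde{F}(w)$ is secured on an open set, the remaining computation is routine; the only points of care are that the standing hypothesis $g' > 0$ keeps $1/g'(V)$ smooth, and that one needs $w$ in a full two-sided neighborhood of $0$ (not merely $w \geq 0$) so that the $w^2$ Taylor coefficient is well defined and its constancy in $V$ can be read off.
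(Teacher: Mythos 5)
Your proposal is correct and follows essentially the same route as the paper: both exploit the fact that the expression \eqref{eqn:gV.expression} is independent of $V_{\eta}(t)$ and extract the ODE \eqref{eqn:ODE1} by a third-order differentiation --- the paper differentiates once in $x$ and twice in $\delta$ before setting $\delta = 0$, while you Taylor-expand in $w$ and require the $w^2$ coefficient to be constant in $V$, which is the same mixed partial derivative computed in the opposite order. Your added care in checking that $(V, w)$ genuinely ranges over a two-dimensional open set (via the free initial value and the non-triviality assumption) is a point the paper leaves implicit, but it does not change the substance of the argument.
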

\begin{proof}
Let $x = V_{\eta}(t)$, and let $\delta =  \nabla \varphi(\mu(t)) \cdot ( \mu(t + 1) - \mu(t) )$. From \eqref{eqn:gV.expression}, for any $\delta$, the expression
\begin{equation} \label{eqn:invariance}
g(x + \frac{1}{g'(x)} \delta) - g(v)
\end{equation}
does not depend on $x$.

Differentiating \eqref{eqn:invariance} with respect to $x$, we have
\[
g'(x + \frac{1}{g'(x)}\delta) \left(1 - \delta \frac{g''(x)}{(g'(x))^2}\right) - g'(x) = 0.
\]
Next we differentiate with respect to $\delta$ (since $\eta$ is assumed to be non-trivial, this can be done by varying $\mu(t + 1)$):
\[
g''(x + \frac{1}{g'(x)} \delta) \frac{1}{g'(x)} \left(1 - \delta \frac{g''(x)}{(g'(x))^2}\right) + g'(x + \frac{1}{g'(x)} \delta) \frac{- g''(x)}{(g'(x))^2} = 0.
\]
Differentiating one more time with respect to $\delta$, we have
\begin{equation*}
\begin{split}
& g'''(x + \frac{1}{g'(x)}\delta) \frac{1}{(g'(x))^2} \left(1 - \delta \frac{g''(x)}{(g'(x))^2}\right)\\
&\quad + g''(x + \frac{1}{g'(x)} \delta) \frac{-g''(x)}{(g'(x))^3} - \frac{g''(x + \frac{1}{g'(x)} \delta) g''(x)}{(g'(x))^3} = 0
\end{split}
\end{equation*}
Setting $\delta = 0$, we get $\frac{g'''(x)}{(g'(x))^2} - 2 \frac{(g''(x))^2}{(g'(x))^3} = 0$. Rearranging, we obtain the ODE \eqref{eqn:ODE1}
\end{proof}

With the differential equation in hand, it is not difficult to check the following

\begin{lemma}
All solutions to the ODE \eqref{eqn:ODE1} can be written in the form
\begin{equation} \label{eqn:g.equation}
g(x) = c_0 + c_1 x  \quad \text{or} \quad  g(x) = c_2 \log (c_1 + x) + c_3,
\end{equation}
where the $c_i$'s are real constants.
\end{lemma}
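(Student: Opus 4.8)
The plan is to exploit the standing assumption of this section that $g' > 0$ in order to divide by $g'$ and reduce the order of the equation. Write $u := g'$, a smooth strictly positive function, so that $u' = g''$ and $u'' = g'''$ and the ODE \eqref{eqn:ODE1} becomes the second-order equation $u\, u'' = 2(u')^2$. The key observation I would record first—and the step I expect to be the real crux—is that this equation is nothing but a disguised linearity statement for the reciprocal $1/u$. A direct differentiation gives
\begin{equation*}
\left( \frac{1}{u} \right)'' = \frac{2(u')^2 - u\, u''}{u^3},
\end{equation*}
so $u\,u'' = 2(u')^2$ holds if and only if $(1/u)'' \equiv 0$. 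Spotting this first integral is the only genuinely non-mechanical part; everything after it is elementary integration.

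Having reduced the problem to $(1/u)'' \equiv 0$, I would conclude that $1/u$ is an affine function, $1/u = a x + b$ for real constants $a, b$, whence
\begin{equation*}
g'(x) = u(x) = \frac{1}{a x + b}.
\end{equation*}
Then I split into two cases according to the slope $a$. If $a = 0$, then $g' \equiv 1/b$ is a positive constant (positivity of $g'$ forces $b > 0$), and integrating once yields $g(x) = c_0 + c_1 x$ with $c_1 = 1/b$, the first family in \eqref{eqn:g.equation}. If $a \neq 0$, I write $a x + b = a(x + b/a)$ and integrate $g'(x) = \tfrac{1}{a}(x + b/a)^{-1}$ to obtain $g(x) = \tfrac{1}{a}\log|x + b/a| + c_3$; relabelling $c_2 = 1/a$ and $c_1 = b/a$ gives the second family $g(x) = c_2 \log(c_1 + x) + c_3$. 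The absolute value is harmless, since $g' > 0$ forces $a x + b > 0$ on the domain, so $c_1 + x > 0$ there.

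The only points requiring care are precisely these reductions of order: forming $1/u$ is legitimate because $g'$ never vanishes, and because the domain of $g$ is an interval (it contains $(0,\infty)$), the identity $(1/u)'' \equiv 0$ forces $1/u$ to be \emph{globally} affine, so no patching of local solutions is needed. As a sanity check I would verify directly that both families solve \eqref{eqn:ODE1}: the linear case has $g'' = g''' = 0$ trivially, and the logarithmic case is a one-line computation. If one wishes to avoid guessing the identity $(1/u)'' = 0$, the same two families emerge from the more pedestrian substitution $w = u'/u$, under which the equation collapses to the separable first-order ODE $w' = w^2$; solving this and then integrating twice to recover first $u = g'$ and then $g$ reproduces exactly \eqref{eqn:g.equation}.
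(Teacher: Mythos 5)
The paper itself offers no argument for this lemma---it is prefaced only by ``it is not difficult to check''---so your write-up supplies a proof the paper leaves implicit, and your overall route is the natural one. The reduction is correct: with $u = g'$ the ODE \eqref{eqn:ODE1} becomes $u u'' = 2(u')^2$, and since $(1/u)'' = \bigl(2(u')^2 - u u''\bigr)/u^3$, this is exactly the statement $(1/u)'' \equiv 0$; positivity of $g'$ justifies forming $1/u$, connectedness of the domain makes $1/u$ globally affine, and two integrations produce the two families in \eqref{eqn:g.equation}. The alternative substitution $w = u'/u$, which collapses the equation to $w' = w^2$, is equally valid.

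There is, however, one genuine loose end in the case $a \neq 0$: the inference ``$g' > 0$ forces $ax + b > 0$, so $c_1 + x = x + b/a > 0$'' contains a sign slip, because dividing $ax + b > 0$ by $a$ preserves the inequality only when $a > 0$. If $a < 0$, one gets $x + b/a < 0$, and the solution is $g(x) = \tfrac{1}{a}\log\bigl(-(x + b/a)\bigr) + c_3$; a concrete instance is $g(x) = -\log(1 - x)$ on $(-\infty, 1)$, which solves \eqref{eqn:ODE1} with $g' > 0$ yet is \emph{not} of the form $c_2 \log(c_1 + x) + c_3$ on its domain. To exclude this case you must use the section's standing assumption---which you already cite, but only to argue that $1/u$ is globally affine---that the domain of $g$ contains $(0, \infty)$: then $1/g'(x) = ax + b$ must remain positive for arbitrarily large $x > 0$, forcing $a \geq 0$, hence $a > 0$ in this case, and your conclusion $c_1 + x > 0$ follows. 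With that one line added your proof is complete; it also makes clear that the lemma's phrase ``all solutions'' is to be read relative to the standing hypotheses on $g$, since without them the statement is literally false.
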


\begin{proof}[Proof of Proposition \ref{prop:scale.function}]
The proposition is now a direct consequence of the previous lemmas. The conditions on the constants $c_i$ ensure that $g$ is strictly increasing and its domain contains $(0, \infty)$.
\end{proof}

\subsection{A new functional generation}
Note that scale functions of the form $g(x) = c_0 + c_1x$ correspond to additive generation, so they do not lead to new portfolios. Next, we will show that each scale function $g(x) = c_2 \log (c_1 + x) + c_3$ admits a functional portfolio generation. Note that the additive constant $c_3$ plays no role and may be omitted. Let us write $c_2 = \frac{1}{\alpha}$ and $c_1 = C$, where $\alpha > 0$ and $C \geq 0$ are the parameters. Motivated by \eqref{eqn:eta.formula} we give the following definition.

\begin{definition}[$(\alpha, C)$-generation] \label{def:genera.generation}
Let $\varphi: \Delta_n \rightarrow \infty$ be a smooth function such that $e^{\alpha \varphi}$ is  concave on $\Delta_n$. (We call $\varphi$ an $\alpha$-exponentially concave function.) Also let $V_{\eta}(t) > 0$ be fixed. The trading strategy $(\alpha, C)$-generated by $\varphi$ is the process $\eta(t)$ defined by
\begin{equation} \label{eqn:general.generation}
\eta_i(t) = \alpha (C + V_{\eta}(t)) D_{e_i - \mu(t)} \varphi(\mu(t)) + V_{\eta}(t), \quad i = 1, \ldots, n.
\end{equation}
Here the initial value $V_{\eta}(0)$ is fixed and arbitrary.
\end{definition}

It is easy to verify, in the manner of Lemma \ref{lem:self.finance}, that \eqref{eqn:general.generation} is a self-financed trading strategy. Comparing \eqref{eqn:general.generation} with \eqref{eqn:multiplicative.map} and \eqref{eqn:additive.portfolio}, we see that multiplicative generation corresponds to the case $C = 0$ and $\alpha = 1$, and additive generation corresponds to the limit when $\alpha = \frac{1}{C} \rightarrow 0$. Note that when $C \neq 0$, the portfolio weights depend on both $V_{\eta}(t)$ and $\mu(t)$. 

The trading strategy can be interpreted as follows. By increasing $C$, we may construct portfolios that are more aggressive than the multiplicatively generated portfolio. Note that we keep the parameter $\alpha$ so that we can generate different portfolios with the same generating function $\varphi$.

\begin{lemma} [Portfolio weight of $\eta$]
Let $\pi^{(\alpha)}$ be the portfolio process generated multiplicatively by the exponentially concave function $\alpha \varphi$. If $V_{\eta}(t) > 0$, the portfolio weight vector $\pi(t)$ of the $(\alpha, C)$-generated trading strategy $\eta$ is given by
\begin{equation} \label{eqn:eta.weight}
\pi(t) = \left( \frac{\eta_1(t) \mu_1(t)}{V_{\eta}(t)}, \ldots, \frac{\eta_n(t) \mu_n(t)}{V_{\eta}(t)}\right) = \frac{C + V_{\eta}(t)}{V_{\eta}(t)} \pi^{(\alpha)}(t) - \frac{C}{V_{\eta}(t)} \mu(t).
\end{equation}
In particular, at each point in time, $\eta(t)$ longs the multiplicatively generated portfolio $\pi^{(\alpha)}$ and shorts the market portfolio with weights depending on $V_{\eta}(t)$ and $C$.
\end{lemma}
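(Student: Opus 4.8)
The plan is to substitute the explicit formula \eqref{eqn:general.generation} for $\eta_i(t)$ directly into the definition of the portfolio weight and then to recognize the resulting pieces in terms of the multiplicative portfolio map generated by $\alpha\varphi$. First I would write, for each $i$,
\[
\pi_i(t) = \frac{\eta_i(t)\mu_i(t)}{V_{\eta}(t)} = \frac{C + V_{\eta}(t)}{V_{\eta}(t)}\,\alpha\,\mu_i(t)\,D_{e_i - \mu(t)}\varphi(\mu(t)) + \mu_i(t),
\]
where the additive $V_{\eta}(t)$ term in \eqref{eqn:general.generation} contributes the lone $\mu_i(t)$ after cancellation with the denominator.

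The key step is to identify the quantity $\alpha\,\mu_i(t)\,D_{e_i-\mu(t)}\varphi(\mu(t))$. Since $e^{\alpha\varphi}$ is concave, $\alpha\varphi$ is exponentially concave in the sense of Definition \ref{def:multiplicative.fgp}, so the multiplicative portfolio map it generates is, by \eqref{eqn:multiplicative.map} together with the linearity of the directional derivative,
\[
\pi^{(\alpha)}_i(t) = \mu_i(t)\bigl(1 + \alpha\,D_{e_i - \mu(t)}\varphi(\mu(t))\bigr).
\]
Rearranging yields the identity $\alpha\,\mu_i(t)\,D_{e_i - \mu(t)}\varphi(\mu(t)) = \pi^{(\alpha)}_i(t) - \mu_i(t)$, which I would substitute into the previous display. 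Collecting the $\mu_i(t)$ terms then gives
\[
\pi_i(t) = \frac{C + V_{\eta}(t)}{V_{\eta}(t)}\,\pi^{(\alpha)}_i(t) - \frac{C}{V_{\eta}(t)}\,\mu_i(t),
\]
which is exactly \eqref{eqn:eta.weight}.

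For the long-short interpretation I would finally observe that the two coefficients $\frac{C + V_{\eta}(t)}{V_{\eta}(t)}$ and $-\frac{C}{V_{\eta}(t)}$ sum to $1$, so $\pi(t)$ is a genuine (weights-summing-to-one) portfolio; and since $V_{\eta}(t) > 0$ and $C \geq 0$, the first coefficient is positive while the second is non-positive, so $\eta$ holds a long position in $\pi^{(\alpha)}$ and a short position in the market $\mu$. The argument is a short chain of algebraic substitutions and I do not anticipate any genuine obstacle; the one point demanding care is to apply Definition \ref{def:multiplicative.fgp} to the rescaled function $\alpha\varphi$ rather than to $\varphi$ itself, after which the remainder is routine bookkeeping of the coefficients.
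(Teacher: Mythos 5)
Your proof is correct and is precisely the ``direct computation using \eqref{eqn:general.generation}'' that the paper's one-line proof refers to: substitute the strategy into the weight formula, identify $\mu_i(t)\bigl(1+\alpha D_{e_i-\mu(t)}\varphi(\mu(t))\bigr)$ as $\pi^{(\alpha)}_i(t)$ via \eqref{eqn:multiplicative.map} applied to $\alpha\varphi$, and collect coefficients. Your closing observation that the two coefficients sum to $1$ and have the stated signs also correctly justifies the long-short interpretation.
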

\begin{proof}
Direct computation using \eqref{eqn:general.generation}.
\end{proof}

To formulate a pathwise decomposition, we need a divergence on $\Delta_n$ that corresponds to our $(\alpha, C)$-generation. This divergence is a generalization of both the Bregman and $L$-divergences.

\begin{definition}[$L^{(\alpha)}$-divergence]
Let $\varphi$ be differentiable and $\alpha$-exponentially concave on $\Delta_n$. The $L^{(\alpha)}$-divergence of $\varphi$ is the functional $D_{L^{(\alpha)}, \varphi}\left[ \cdot \mid \cdot\right]: \Delta_n \times \Delta_n \rightarrow [0, \infty)$ defined by
\begin{equation} \label{eqn:L.alpha.divergence}
D_{L^{(\alpha)}, \varphi}\left[q \mid p\right] = \frac{1}{\alpha} \log \left(1 + \alpha \nabla \varphi(p) \cdot (q - p) \right) + \left( \varphi(q) - \varphi(p) \right).
\end{equation}
In particular, it equals $\frac{1}{\alpha}$ times the usual $L$-divergence of $\alpha \varphi$, i.e,
\[
D_{L^{\alpha}, \varphi} \left[ \cdot \mid \cdot\right] \equiv \frac{1}{\alpha} D_{L^{(1)}, \alpha \varphi} \left[ \cdot \mid \cdot\right].
\]
\end{definition}

\begin{lemma}
Suppose $\varphi$ is $\alpha$-exponentially concave for all $\alpha > 0$ sufficiently small. Then, as $\alpha \rightarrow 0$, the $L^{(\alpha)}$-divergence converges pointwise to the Bregman divergence:
\begin{equation*}
\lim_{\alpha \downarrow 0} D_{L^{(\alpha)}, \varphi}\left[q \mid p\right] = D_{B, \varphi}\left[q \mid p\right], \quad p, q \in \Delta_n.
\end{equation*}
\end{lemma}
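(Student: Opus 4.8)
The plan is to reduce the claim to the elementary limit $\lim_{\alpha \downarrow 0} \frac{1}{\alpha}\log(1 + \alpha a) = a$. Fix $p, q \in \Delta_n$ and abbreviate $a := \nabla \varphi(p) \cdot (q - p)$, a fixed real number that does not depend on $\alpha$. Since $\varphi$ is differentiable, $a$ is finite, and for all $\alpha > 0$ sufficiently small we have $1 + \alpha a > 0$, so the logarithm in the definition of the $L^{(\alpha)}$-divergence is well defined along the way. The key observation is that the only $\alpha$-dependent part of $D_{L^{(\alpha)}, \varphi}[q \mid p]$ is the term $\frac{1}{\alpha}\log(1 + \alpha a)$, while the remaining contribution $\varphi(q) - \varphi(p)$ is constant in $\alpha$.

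First I would evaluate $\lim_{\alpha \downarrow 0} \frac{1}{\alpha}\log(1 + \alpha a)$. Using the first-order Taylor expansion $\log(1 + \alpha a) = \alpha a + O(\alpha^2)$ as $\alpha \downarrow 0$ (equivalently, L'Hôpital's rule on the $0/0$ form), we get $\frac{1}{\alpha}\log(1 + \alpha a) = a + O(\alpha) \to a$. Substituting this back and using that $\varphi(q) - \varphi(p)$ does not vary with $\alpha$, the limit of $D_{L^{(\alpha)}, \varphi}[q \mid p]$ equals $a - (\varphi(q) - \varphi(p)) = \nabla \varphi(p) \cdot (q - p) - (\varphi(q) - \varphi(p))$, which is exactly $D_{B, \varphi}[q \mid p]$ by Definition \ref{def:Bregman}.

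Alternatively, and perhaps more transparently, I would invoke the identity $D_{L^{(\alpha)}, \varphi}[\cdot \mid \cdot] = \frac{1}{\alpha} D_{L, \alpha \varphi}[\cdot \mid \cdot]$ recorded just after the definition of the $L^{(\alpha)}$-divergence; expanding the $L$-divergence of $\alpha \varphi$ via Definition \ref{def:L.divergence} and dividing by $\alpha$ isolates the same logarithmic term, and the limit follows identically. I expect no genuine obstacle: the argument is a single application of the expansion of $\log(1 + x)$ near $x = 0$. The only points deserving a word of care are confirming $1 + \alpha a > 0$ for $\alpha$ near $0$ so that every quantity is defined, and tracking the sign of the $\varphi(q) - \varphi(p)$ term so that it matches the Bregman divergence; the latter is dictated precisely by the relation $D_{L^{(\alpha)}, \varphi} = \frac{1}{\alpha} D_{L, \alpha \varphi}$ and is consistent with the $\alpha = 1$ case reducing to the $L$-divergence.
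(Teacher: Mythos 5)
Your proof is correct and takes essentially the same route as the paper, whose entire proof is the one-line remark that the claim ``follows from the limit $\frac{1}{\alpha}\log(1+\alpha x)\to x$ when $\alpha\downarrow 0$'' --- exactly your reduction. Your attention to the sign of the $\varphi(q)-\varphi(p)$ term is also well placed: the displayed formula \eqref{eqn:L.alpha.divergence} carries a ``$+$'' that is inconsistent with the identity $D_{L^{(\alpha)},\varphi}\equiv\frac{1}{\alpha}D_{L^{(1)},\alpha\varphi}$ stated immediately after it (evidently a typo in the paper), and reading the definition through that identity, as you do, is precisely what makes the limit equal the Bregman divergence.
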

\begin{proof}
This follows from the limit $\frac{1}{\alpha} \log (1 + \alpha x) \rightarrow x$ when $\alpha \downarrow 0$.
\end{proof}

Now we derive the promised pathwise decomposition formula for an $(\alpha, C)$-generated trading strategy. Since the portfolio $\eta$ may have short positions, unfortunately we cannot guarantee that $g(V_{\eta}(t)) = \frac{1}{\alpha}\log (C + V_{\eta}(t))$ is well-defined for all $t$. Nevertheless, the decomposition holds whenever the value is bounded below by $-C$.

\begin{theorem}[Pathwise decomposition] \label{thm:general.decomp}
Consider an $(\alpha, C$)-generated trading strategy $\eta$ as in Definition \ref{def:genera.generation}. If $V_{\eta}(\cdot) > -C$, the value process satisfies the pathwise decomposition
\begin{equation} \label{eqn:new.decomp}
\frac{1}{\alpha} \log \frac{C + V_{\eta}(t)}{C + V_{\eta}(0)} = \varphi(\mu(t)) - \varphi(\mu(0)) + \sum_{s = 0}^{t - 1} D_{L^{(\alpha)}, \varphi}\left[ \mu(s + 1) \mid \mu(s) \right].
\end{equation}
\end{theorem}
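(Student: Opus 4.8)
The plan is to reproduce the one-step-and-telescope argument behind Theorems~\ref{thm:fernholz} and~\ref{thm:additive.decomp}, now with the scale function $g(x) = \frac{1}{\alpha}\log(C + x)$, for which $\frac{1}{g'(x)} = \alpha(C + x)$. First I would compute the single-period return directly from the defining formula~\eqref{eqn:general.generation}, using $V_\eta(t+1) - V_\eta(t) = \eta(t)\cdot(\mu(t+1) - \mu(t))$. The additive term $V_\eta(t)$ in $\eta_i(t)$ drops out because $\sum_i(\mu_i(t+1) - \mu_i(t)) = 0$, exactly as in the proof of Lemma~\ref{lem:self.finance}; writing $D_{e_i - \mu(t)}\varphi(\mu(t)) = \nabla\varphi(\mu(t))\cdot(e_i - \mu(t))$ and using $\sum_i(\mu_i(t+1) - \mu_i(t))(e_i - \mu(t)) = \mu(t+1) - \mu(t)$ collapses the gradient term, leaving
\[
V_\eta(t+1) - V_\eta(t) = \alpha\bigl(C + V_\eta(t)\bigr)\,\nabla\varphi(\mu(t))\cdot(\mu(t+1) - \mu(t)).
\]
This is the specialization of~\eqref{eqn:for.use} to the present $g$, now obtained directly from~\eqref{eqn:general.generation} rather than assumed.

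The crux is the next algebraic step. Adding $C + V_\eta(t)$ to both sides and factoring yields the clean multiplicative identity
\[
\frac{C + V_\eta(t+1)}{C + V_\eta(t)} = 1 + \alpha\,\nabla\varphi(\mu(t))\cdot(\mu(t+1) - \mu(t)).
\]
Applying $\frac{1}{\alpha}\log$ turns the left side into the one-step increment $g(V_\eta(t+1)) - g(V_\eta(t))$ and the right side into exactly the logarithmic term in the definition~\eqref{eqn:L.alpha.divergence} of the $L^{(\alpha)}$-divergence. Rearranging that definition then gives the per-step decomposition
\[
g(V_\eta(t+1)) - g(V_\eta(t)) = \varphi(\mu(t+1)) - \varphi(\mu(t)) + D_{L^{(\alpha)}, \varphi}\bigl[\mu(t+1)\mid\mu(t)\bigr].
\]
Summing over $s = 0, \ldots, t-1$ telescopes both the $g(V_\eta(\cdot))$ terms on the left and the $\varphi(\mu(\cdot))$ terms on the right, producing~\eqref{eqn:new.decomp}.

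The one genuinely delicate point is well-definedness, and this is precisely where the hypothesis $V_\eta(\cdot) > -C$ is used. That condition guarantees $C + V_\eta(t) > 0$ for every $t$, so each $g(V_\eta(t)) = \frac{1}{\alpha}\log(C + V_\eta(t))$ makes sense. I would then observe that applying the hypothesis at time $t+1$ forces $C + V_\eta(t+1) > 0$, which by the factored identity above is \emph{equivalent} to $1 + \alpha\,\nabla\varphi(\mu(t))\cdot(\mu(t+1) - \mu(t)) > 0$ --- exactly the positivity needed for the logarithm inside $D_{L^{(\alpha)}, \varphi}$ to be defined. Thus the single hypothesis simultaneously validates both sides of the per-step identity, and beyond this observation the argument is a routine bookkeeping computation requiring no estimates.
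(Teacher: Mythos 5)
Your proof is correct and follows essentially the same route as the paper's: from \eqref{eqn:general.generation} one obtains the one-step identity $\frac{C + V_{\eta}(t+1)}{C + V_{\eta}(t)} = 1 + \alpha\, \nabla\varphi(\mu(t)) \cdot (\mu(t+1) - \mu(t))$, applies $\frac{1}{\alpha}\log$, invokes the definition of the $L^{(\alpha)}$-divergence to get the per-step decomposition, and telescopes. Your explicit treatment of well-definedness --- that $V_{\eta}(\cdot) > -C$ simultaneously makes $g(V_{\eta}(t))$ and the logarithm inside the divergence meaningful --- is a welcome elaboration of a point the paper leaves implicit.
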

\begin{proof}
By \eqref{eqn:general.generation}, at each time $t$ we have
\begin{equation*}
\begin{split}
& \frac{1}{\alpha} \log (C + V_{\eta}(t + 1)) - \frac{1}{\alpha} \log (C + V_{\eta}(t))\\
 &= \frac{1}{\alpha} \log \frac{C + V_{\eta}(t) + \alpha (C + V_{\eta}(t)) \nabla \varphi(\mu(t)) \cdot (\mu(t + 1) - \mu(t))}{C + V_{\eta}(t)} \\
 &= \frac{1}{\alpha} \log \left(1 + \alpha \nabla \varphi(\mu(t)) \cdot (\mu(t + 1) - \mu(t))\right) \\
 &= \varphi(\mu(t + 1)) - \varphi(\mu(t)) + D_{L^{(\alpha)}, \varphi}\left[\mu(t + 1) \mid \mu(t)\right].
\end{split}
\end{equation*}
This yields the desired decomposition.
\end{proof}

Note that in the above proof the term $V_{\eta}(t)$ nicely cancels out when we take the difference $g(V_{\eta}(t + 1)) - g(V_{\eta}(t))$. This is because the scale function satisfies the differential equation \eqref{eqn:ODE1}.

Since the $L^{(\alpha)}$-divergence is nothing but a normalized $L$-divergence, we may apply directly the information geometry developed in \cite{PW16} to state a Pythagorean theorem which characterizes optimal rebalancing for three time points. More interestingly, for a given exponentially concave function the $L^{(\alpha)}$-divergence introduces a natural interpolation between the $L$-divergence (when $\alpha = 1$) and the Bregman divergence (when $\alpha = 0$). In terms of optimal transport, we may interpolate between the logarithmic cost function and the quadratic cost. We plan to investigate this question, together with the setting of \cite{P16}, in future research.

\subsection{An empirical example}
Consider a smooth and exponentially concave function $\varphi$. It is $\alpha$-exponentially concave for all $0 < \alpha \leq 1$ and is concave (which corresponds to the case $\alpha \downarrow 0$). Thus both the additive and multiplicatively generated portfolios are well-defined. Unfortunately, while the $L^{(\alpha)}$-divergence is a natural interpolation,  there does not seem to be a canonical choice for the constant $C$ that connects the two basic cases.

 In this subsection we consider instead the parameterized family $\left\{ \eta^{(\alpha)} \right\}_{0 \leq \alpha \leq 1}$ where $\eta^{(\alpha)}$ is the trading strategy $(\alpha, \frac{1}{\alpha})$-generated by $\varphi$ (so when $\alpha = 0$ it is the additively generated portfolio), and compare their empirical performance. Note that $\eta^{(1)}$ is not the multiplicatively generated portfolio but is one which is more aggressive.

In this empirical example we let $n = 3$, so that we are able to visualize the path of the market in the simplex $\Delta_3$. We consider the (beginning) monthly stock prices of the US companies Ford, Walmart and IBM from January 1990 ($t = 0$) to  September 2017 ($t = 332$). We normalize the prices so that at $t = 0$ the market weight is at the barycenter $\left(\frac{1}{3}, \frac{1}{3}, \frac{1}{3}\right)$. The path of the market weight $\mu(t)$ in the simplex $\Delta_3$ is plotted in Figure \ref{fig:output} (left).

\begin{figure}[t!]
\centering
\includegraphics[scale=0.5]{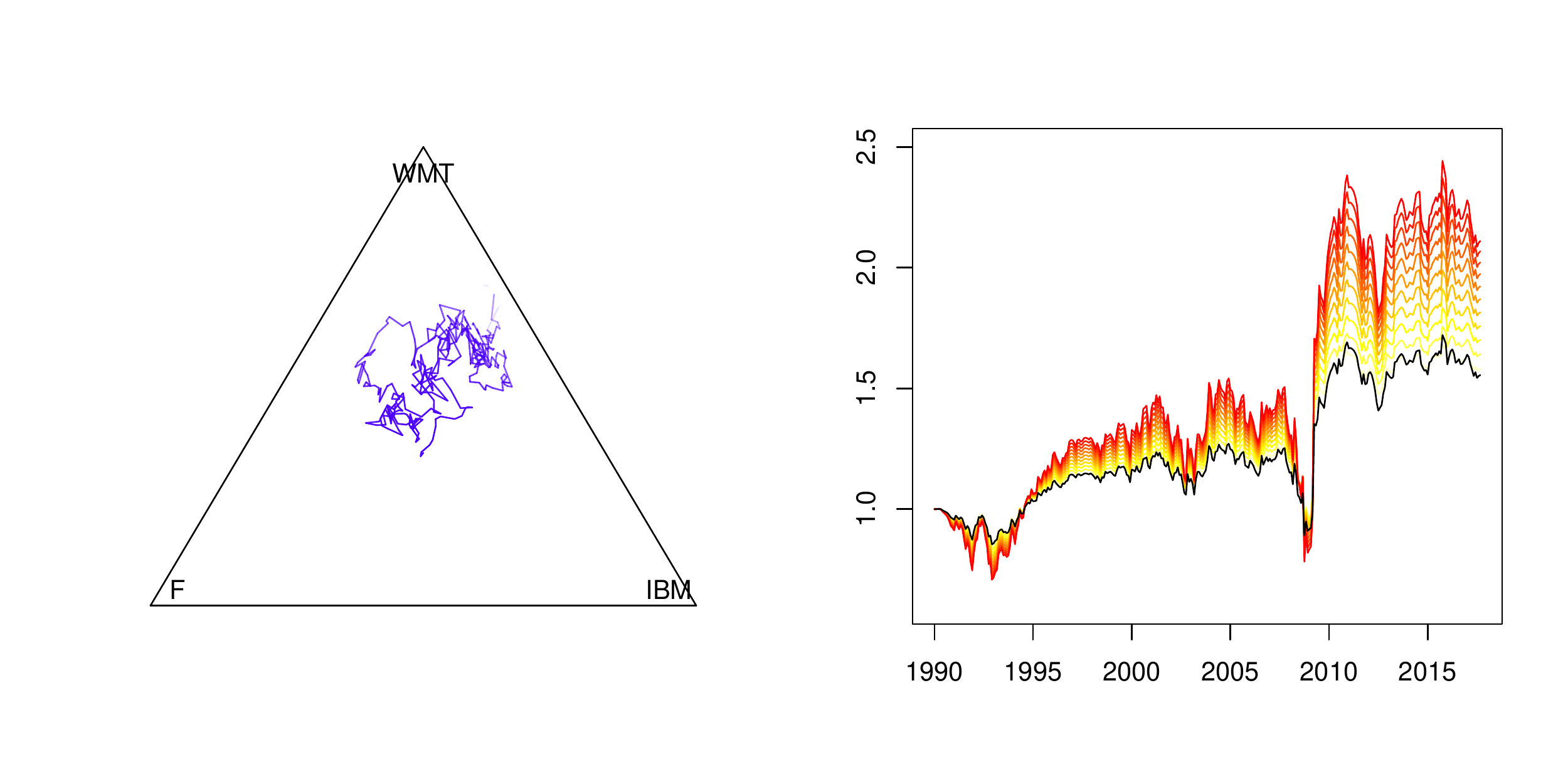}
\caption{Left: The market path $\{\mu(t)\}_{t = 0}^{332}$ in the simplex $\Delta_3$. The color is chosen according to the current value of $\varphi(\mu(t))$ (which is maximum at the barycenter, when $t = 0$). Right: Time series of the portfolio value $V_{\eta^{(\alpha)}}(t)$, from $\alpha = 0$ (yellow) to $\alpha = 1$ (red). The value of the equal-weighted portfolio is shown in black.}
\label{fig:output}
\end{figure}

The generating function chosen is the cross entropy
\begin{equation}
\varphi(p) = \sum_{i = 1}^n \frac{1}{3} \log p_i
\end{equation}
which generates multiplicatively the equal-weighted portfolio $\boldsymbol{\pi}(p) \equiv \overline{e} := \left(\frac{1}{3}, \frac{1}{3}, \frac{1}{3}\right)$. By \eqref{eqn:general.generation}, for each $\alpha \in [0, 1]$ the trading strategy is given by
\[
\eta^{(\alpha)}_i(t) = \left(1 + \alpha V_{\eta}(t)\right) \left(\frac{1}{3 \mu_i(t)} - 1\right) + V_{\eta}(t).
\]
In terms of portfolio weights, we have
\[
\pi^{(\alpha)}(t) = \frac{1 + \alpha V_{\eta}(t)}{V_{\eta}(t)} \overline{e} - \frac{1 + \alpha V_{\eta}(t) - V_{\eta}(t)}{V_{\eta}(t)} \mu(t).
\]
Thus the portfolio longs more and more the equal-weighted portfolio as $\alpha$ increases. We also set $V_{\eta}(0) = 1$. The corresponding $L^{(\alpha)}$-divergence is given by
\[
D^{(\alpha)}\left[q \mid p\right] = \frac{1}{\alpha} \log \left(1 + \alpha \sum_{i = 1}^n \frac{1}{n p_i} (q_i - p_i)\right) - \sum_{i = 1}^n \frac{1}{n} \log \frac{q_i}{p_i}.
\]

The values of the simulated portfolios are plotted in Figure \ref{fig:output} (right). At the end of the period the portfolio value is increasing in $\alpha$, and the additive portfolio ($\alpha = 0$) has the smallest value. It is interesting to note that the reverse is true at the beginning. Note that the values fluctuate widely in the period 2008--2009 corresponding to the financial crisis. For comparison, we also simulate the equal-weighted portfolio (i.e., $(\alpha, C) = (1, 0)$). Interestingly, the additive and multiplicative portfolios have similar behaviors here. In this period, shorting the market by using a positive value for $C$ gives significant advantage over both the additive and multiplicative portfolios.

Intuitively, as the parameter $\alpha$ decreases from $1$ to $0$, the effect of the divergence changes from compounding to addition, as can be seen from the decomposition \eqref{eqn:general.decomp}. If the divergence term grows roughly linearly in time (see \cite{FK05}), the additive portfolio will underperform those with $\alpha > 0$ in the long run, at least when $\varphi(\mu(t))$ is stable. Expressing this in another way, additively generated portfolios may be better over short horizons. Dynamic optimization over our extended functionally generated portfolios is an interesting problem.

\section*{Acknowledgement}
The author would like to thank Prof.~Ioannis Karatzas and Prof.~Johannes Ruf for helpful comments on the manuscript.

\bibliographystyle{abbrv}
\bibliography{referencesSPT}

\end{document}